\documentclass[10pt]{article}
\pdfoutput=1

\usepackage{amssymb}
\usepackage{amsmath}
\usepackage{mathtools}
\usepackage{algpseudocode}
\usepackage{bm}
\textwidth=6.5true in
\textheight=9true in
\topmargin-0.5true in
\oddsidemargin=-0.25true in
\usepackage{graphics,epsfig}
\DeclareGraphicsExtensions{.pdf}

\usepackage{url,here}

\usepackage{algorithm}
\allowdisplaybreaks
\usepackage[backref,colorlinks=true]{hyperref}
\usepackage{multirow}

\parskip=.1in
\begin{document}

\newenvironment {proof}{{\noindent\bf Proof.}}{\hfill $\Box$ \medskip}

\newtheorem{theorem}{Theorem}[section]
\newtheorem{lemma}[theorem]{Lemma}
\newtheorem{condition}[theorem]{Condition}
\newtheorem{proposition}[theorem]{Proposition}
\newtheorem{remark}[theorem]{Remark}
\newtheorem{definition}[theorem]{Definition}
\newtheorem{hypothesis}[theorem]{Hypothesis}
\newtheorem{corollary}[theorem]{Corollary}
\newtheorem{example}[theorem]{Example}
\newtheorem{descript}[theorem]{Description}
\newtheorem{assumption}[theorem]{Assumption}
\newcommand{\ag}[1]{{\color{black} #1}}

\def\P{\mathbb{P}}
\def\R{\mathbb{R}}
\def\E{\mathbb{E}}
\def\N{\mathbb{N}}
\def\Z{\mathbb{Z}}

\renewcommand {\theequation}{\arabic{section}.\arabic{equation}}
\def \non{{\nonumber}}
\def \hat{\widehat}
\def \tilde{\widetilde}
\def \bar{\overline}

\def\ind{{\mathchoice {\rm 1\mskip-4mu l} {\rm 1\mskip-4mu l}
{\rm 1\mskip-4.5mu l} {\rm 1\mskip-5mu l}}}
\algnewcommand{\algorithmicgoto}{\textbf{go to step}}%
\algnewcommand{\Goto}[1]{\algorithmicgoto~\ref{#1}}%

\title{\Large\ { \bf A finite state projection method for steady-state sensitivity analysis of stochastic reaction networks}}

\author{Patrik D\"urrenberger, Ankit Gupta, and Mustafa Khammash \\
Department of Biosystems Science and Engineering \\ ETH Zurich \\  Mattenstrasse 26 \\ 4058 Basel, Switzerland. 
}
\date{}
\maketitle
\begin{abstract}
Consider the standard stochastic reaction network model where the dynamics is given by a continuous-time Markov chain over a discrete lattice. For such models, estimation of parameter sensitivities is an important problem, but the existing computational approaches to solve this problem usually require time-consuming Monte Carlo simulations of the reaction dynamics. Therefore these simulation-based approaches can only be expected to work over finite time-intervals, while it is often of interest in applications to examine the sensitivity values at the steady-state after the Markov chain has relaxed to its stationary distribution. The aim of this paper is to present a computational method for the estimation of steady-state parameter sensitivities, which instead of using simulations, relies on the recently developed \emph{stationary Finite State Projection (sFSP)} algorithm [J. Chem. Phys. 147, 154101 (2017)] that provides an accurate estimate of the stationary distribution at a fixed set of parameters. We show that sensitivity values at these parameters can be estimated from the solution of a Poisson equation associated with the infinitesimal generator of the Markov chain. We develop an approach to numerically solve the Poisson equation and this yields an efficient estimator for steady-state parameter sensitivities. We illustrate this method using several examples.
\end{abstract}

\noindent Keywords: stochastic reaction networks; the Chemical Master Equation; Finite State Projection; stationary distribution; sensitivity analysis; parameter sensitivity; Poisson equation\\

\noindent {\bf Mathematical Subject Classification (2010):}  60J22; 60J27; 60H35; 65C40; 92E20
\medskip

\setcounter{equation}{0}

\section{Introduction} \label{sec:intro}
Stochastic models of reaction networks are commonly used in systems and synthetic biology to model reaction dynamics within cells, where some biomolecular reactants are typically present in low copy-numbers \cite{Elowitz}. Such models capture the random timing of reactions and allows one to investigate the role of this randomness in causing cell-to-cell variability and shaping the macroscopic properties of clonal cell-populations \cite{Goutsias}.

In a stochastic model the reaction dynamics is represented as a \emph{continuous-time Markov chain} (CTMC), which keeps track of the molecular or copy-number counts of all the reacting species \cite{DASurvey}. Therefore the CTMC evolves on a discrete state-space $\mathcal{E} \subset \N^d_0$, where $\N_0$ is the set of nonnegative integers and $d$ is the number of reacting species. It is known that the dynamics of the probability distribution of the CTMC is given by the \emph{Chemical Master Equation} (CME) that consists of an ODE describing the inflow and outflow of probabilities at each state in the state-space $\mathcal{E}$. Therefore if this state-space is infinite in size, as is the case in many examples of interest, then the CME is nearly impossible to solve exactly. However approximate solutions to the CME over finite time-periods can be obtained using the \emph{Finite-State Projection} (FSP) that projects the probability dynamics on a finite truncated state-space and solves the resulting system of ODEs \cite{FSP}. Consider the situation when the CTMC is \emph{ergodic} and so the CME solution converges to a unique stationary distribution as time tends to infinity \cite{GuptaPLOS}. In such a setting, the classical FSP is inappropriate for estimating the stationary distribution but recently a modification of this method has been proposed that is able to estimate the stationary distribution accurately under certain conditions \cite{sFSP}. This method is called the \emph{stationary Finite State Projection} (sFSP) and it is described in detail in Section \ref{subsec:sfspmethod}.

In many applications it is of interest to quantitatively determine the influence of some parameter $\theta$ (like temperature, extracellular ligand concentration, cell-volume etc.) on an output of the form $\E( f( X_\theta(t) ) )$, where $\E$ is the expectation operator, $( X_\theta(t) )_{t \geq 0}$ is the $\theta$-dependent CTMC that describes the reaction dynamics, and $f$ is some real-valued function on the CTMC state-space $\mathcal{E}$. This parameter influence is often measured by estimating the infinitesimal sensitivity value   
\begin{align}
\label{defn_paramsens}
S_\theta(f,t) :=  \frac{\partial}{\partial \theta} \E( f( X_\theta(t) ) ). 
\end{align} 
Computing such sensitivities w.r.t. various parameters is useful for many applications, like investigating robustness properties of networks \cite{Stelling}, finding critical reactions, parameter inference \cite{Fink2009} and controlling a system's output \cite{Feng}. Suppose that the CTMC $( X_\theta(t) )_{t \geq 0}$ is ergodic and its unique $\theta$-dependent stationary distribution is $\pi_\theta$. The goal of this paper is to develop a method to numerically estimate the steady-state sensitivity defined by
\begin{align}
\label{defn_ssparamsens}
S_\theta(f):= \lim_{t \to \infty}  \frac{\partial}{\partial \theta} \E( f( X_\theta(t) ) ).
\end{align}
Under certain mild conditions, this limit can be shown to exist and it can be represented as \cite{Gupta3}  
\begin{align}
\label{defn_ssparamsens2}
S_\theta(f) = \frac{ \partial }{ \partial \theta} \left\langle f, \pi_\theta \right\rangle,
\end{align}
where r.h.s.\ is the expectation of function $f$ under the stationary distribution $\pi_\theta$, i.e.
\begin{align}
\label{defn_inner_prod_measure}
 \left\langle f, \pi_\theta \right\rangle =\sum_{x \in \mathcal{E}} f(x) \pi_\theta(x).
\end{align}

Many methods \cite{IRN,Gir,KSR1,KSR2,DA,Our,Gupta2} exist for estimating finite-time sensitivities of the form \eqref{defn_paramsens}, but they rely on simulations of the dynamics $( X_\theta (t) )_{t  \geq 0 }$ obtained with Gillespie's \emph{stochastic simulation algorithm} (SSA) \cite{GP} or its counterparts \cite{NR,AndMod}. As simulations can only be performed over finite time-periods, these methods cannot be naturally extended to estimate steady-state sensitivities \eqref{defn_ssparamsens}. It is of course possible to approximate the steady-state sensitivity $S_\theta(f)$ with the finite-time sensitivity $S_\theta (f,T)$ for a very large time-value $T$. However this approximation is only accurate if the distribution of the random state $X_\theta(T)$ of the dynamics is sufficiently close to the stationary distribution $\pi_\theta$, and it is very difficult to determine how large $T$ needs to be for this to hold. Moreover when $T$ is large, simulations of the dynamics over time-period $[0,T]$ become computationally very expensive, and a large number of such simulations will be required to obtain a statistically useful estimate of $S_\theta (f,T)$. This is because the variance of finite-time sensitivity estimators generally \emph{blows-up} as $T \to \infty$. For the \emph{Likelihood Ratio} (LR) sensitivity estimation method \cite{Gir}, this variance blow-up problem can be circumvented by appropriately \emph{centering} the estimator \cite{wang2018steady}, but such strategies have not been found for other sensitivity estimation methods.

Motivated by the problems faced by simulation-based methods in estimating steady-state sensitivity $S_\theta(f)$, our aim in this paper is to present a simulation-free approach for estimating this sensitivity value. Our method relies on approximate computation of the stationary distribution $\pi_\theta$ with sFSP \cite{sFSP}, and a novel mathematical result which shows how steady-state sensitivity can be evaluated from the solution of a \emph{Poisson Equation} associated with the generator (see Section \ref{subsec:explFormula}) of the CTMC $( X_\theta (t) )_{t  \geq 0 }$. Solving the Poisson Equation is numerically challenging but we develop a \emph{Basis Function Method} (BFM) that is able to efficiently obtain projection of the solution on the linear space spanned by user-specified basis functions (see Section \ref{subsec:Poisson}). We demonstrate that if the collection of basis function is sufficiently large, then BFM can recover the solution of the Poisson Equation \emph{almost exactly}, and this yields a very accurate estimate of the steady-state sensitivity $S_\theta(f)$. We refer to our method as the \emph{Poisson Estimator} (PE) and we illustrate it with a number of examples. We also compare it with the simulation-based  \emph{integrated centered Likelihood Ratio} (IntCLR) \cite{wang2018steady} method that was recently proposed for steady-state sensitivity estimation (see Section \ref{sec:examples}).

We provide a well-documented open source C++ implementation of our method  (see Section \ref{subsec:code}), which includes \emph{parallelized} solvers for the stationary distribution (with sFSP) and for the Poisson Equation (with BFM). We must point out that apart from estimating sensitivities, solving the Poisson Equation for Markov chains has many other applications, for example in computing the optimal policy function for Markov Decision Processes \cite{meyn2008control,feinberg2012handbook}.

\section{Preliminaries} 
\label{sec:prelim}

\subsection{The Stochastic Reaction Network model} \label{subsec:stochasticmodel}

Consider a reaction network involving $M$ biochemical species, denoted by $\mathcal{S}_1,\dots,\mathcal{S}_M$. These species interact via $K$ reactions, and each reaction $k$ has the form
\begin{align}
\label{reactionform}
\sum_{i = 1}^M \nu_{ik} \mathcal{S}_i  \longrightarrow \sum_{i = 1}^M \nu'_{ik} \mathcal{S}_i, 
\end{align}
where $\nu_{ik}$ $(\nu'_{ik})$ is a nonnegative integer specifying the number of $\mathcal{S}_i$ molecules consumed (produced) by reaction $k$. Hence the change in the number of $\mathcal{S}_i$ molecules, caused by the firing of reaction $k$ is given by the integer $\zeta_{i k} = ( \nu'_{ik} -  \nu_{ik} )$ and $\zeta_k = (\zeta_{1 k},\dots, \zeta_{M k}  )$ is called the \emph{stoichiometry vector} for reaction $k$. To each reaction $k$ we associate a \emph{propensity function} $\lambda_k :  \N^M_0 \to [0,\infty)$, that specifies the firing rate of this reaction as $\lambda_k(x_1,\dots,x_M)$, when the molecular count or copy-number of species $\mathcal{S}_i$ is $x_i$. Commonly this rate function is given by \emph{mass-action kinetics} \cite{DASurvey}
\begin{align}
\label{massactionkin_defn}
\lambda_k(x_1,\dots,x_M) = \theta_k \prod_{i=1}^M \ind_{ \{ x_i \geq \nu_{ik} \} } \frac{x_i (x_i-1)\dots (x_i - \nu_{ik} + 1) }{\nu_{ik} !}
\end{align}
where $\theta_k$ is a positive rate constant.

In the classical \emph{continuous-time Markov chain} (CTMC) model of a reaction network \cite{DASurvey}, the state at time $t$ is simply the vector of species copy-numbers $X(t) = ( X_1(t),\dots, X_M(t) ) \in \N^M_0$ at time $t$. When the state is $X(t) = x$, the rate of firing of the $k$-th reaction is $\lambda_k(x)$ and if it fires before any other reaction, the state moves to $(x +\zeta_k)$. In other words, the generator\footnote{A Markov process can be uniquely characterized by its generator, which is an operator specifying the rate of change of the distribution of the Markov process.} of the CTMC $( X(t) )_{t \geq 0}$ is given by
\begin{align}
\label{defn:ctmc_gen}
\mathbb{Q} f(x) = \sum_{k = 1}^K \lambda_k(x) (f(x +\zeta_k)  - f(x)),
\end{align}
where $f$ is any bounded real-valued function on the state-space $\mathcal{E} \subset \N^M_0$ of the CTMC. We shall assume that this state-space $\mathcal{E}$ is non-empty and \emph{closed} under the CTMC dynamics, i.e. if $x$ is a state in $\mathcal{E}$ and $k$ is some reaction with $ \lambda_k(x) > 0$ then $(x+\zeta_k)$ is also a state in $\mathcal{E}$. 

As the state-space $\mathcal{E}$ is countable, we can construct a \emph{bijection} $\phi : \mathcal{E} \longrightarrow \{0,1\dots, | \mathcal{E} |\} $, where $| \mathcal{E} |$ denotes the number of elements in $\mathcal{E}$. Letting $x_i = \phi^{-1}(i)$ for $i=0,1,\dots$, we can represent the state-space as $\mathcal{E} = \{x_0,x_1,\dots\}$, and express the operator $\mathbb{Q}$ \eqref{defn:ctmc_gen} as the transition rate matrix $Q = [Q_{ij}]$
\begin{align}
\label{defn_transitionratematrix}
Q_{ij} = \left\{  
\begin{array}{cc}
- \sum_{k=1}^K \lambda_k(x_i) & \textnormal{ if } i =j \\
  \sum_{k \in \mathcal{K}_{ij}} \lambda_k(x_i) & \textnormal{ if } i \neq j  
\end{array} \right. 
\end{align}
where $\mathcal{K}_{ij} = \{ k =1\dots,K : x_j = x_i +\zeta_k\}$ is the set of reactions that can take the state from $x_i$ to $x_j$ in a single firing. Note that if $\mathcal{K}_{ij} = \emptyset$ then $Q_{ij} = 0$. Moreover the matrix $Q$ is bi-infinite when $| \mathcal{E} | =\infty$, which is often the case in examples of interest.

Suppose $( X(t) )_{t \geq 0}$ is the CTMC with transition rate matrix $Q$ and some initial state $X(0) \in \mathcal{E}$. Let $p(t) = (p_0(t),p_1(t),\dots)$ where
\begin{align*}
p_i(t) = \P( X(t) = x_i )
\end{align*}
is the probability that the dynamics is in state $x_i$ at time $t$. The time-evolution of the probability vector is given by the well-known \emph{Chemical Master Equation} (CME) which can be expressed as
\begin{align}
\label{defn_cme}
\frac{dp}{dt} = Q^T p(t).
\end{align}
Even though the CME is a first-order linear system of ODEs, solving it analytically is infeasible when the state-space $\mathcal{E}$ is large or infinite in size. For this reason the Finite State Projection (FSP) \cite{FSP} was developed to approximately solve the CME by projecting it on a truncated state-space $\mathcal{E}_n \subset \mathcal{E}$.

\subsection{Ergodicity of the CTMC} \label{sec:ctmcergodic}

Often one is interested in the steady-state behavior of the CTMC $( X(t) )_{t  \geq 0}$, which is characterized by a stationary distribution $\pi = (\pi_0,\pi_1,\dots)$ which acts as a \emph{fixed-point} for the CME \eqref{defn_cme},
\begin{align}
\label{fixedpteqn}
Q^T \pi = {\bf 0}. 
\end{align}
The CTMC is called \emph{ergodic} if the fixed point is unique and globally attracting in the sense that
\begin{align*}
\lim_{t \to \infty} \| p(t) - \pi \|_{ \ell_1}:= \sum_{i = 0}^{ | \mathcal{E} |} |  p_i(t) - \pi_i  | =0.
\end{align*}
In this paper we work under the assumption of \emph{exponential ergodicity} \cite{Meyn} which holds if the above convergence is exponentially fast, i.e. there exists a constant $\rho > 0$ and another constant $C > 0$, that may depend on the initial distribution $p(0)$, such that for any $t > 0$ 
\begin{align}
\label{defn_experg}
 \| p(t) - \pi \|_{ \ell_1} \leq C e^{ - \rho t }.
\end{align}

Verifying ergodicity of the CTMC corresponding to a reaction network is a challenging problem when the state-space is infinite. One way to address this challenge is to first establish that the underlying state-space $\mathcal{E}$ is \emph{irreducible}, i.e. all states in $\mathcal{E}$ are accessible from each other via a sequence of positive-propensity reactions, and then construct a Foster-Lyapunov function on the state-space to show that the dynamics has a tendency to be attracted to a compact set within the state-space \cite{Meyn}. In particular, using Theorem 7.1 in \cite{Meyn} we can prove exponential ergodicity of the CTMC by showing that there exists a \emph{norm-like} function\footnote{A positive function is called norm-like if all its sub-level sets are compact.} $V : \mathcal{E} \to [1,\infty)$ such that for some $C_1,C_2 > 0$, we have
\begin{align}
\label{normlikecondlyapcond}
\mathbb{Q} V(x) \leq C_1 - C_2 V(x) \qquad \textnormal{for all} \qquad x \in \mathcal{E},
\end{align}
where $\mathbb{Q}$ \eqref{defn:ctmc_gen} is the generator of the CTMC. In \cite{GuptaPLOS} and \cite{gupta2018computational} computational procedures have been developed that systematically check state-space irreducibility and construct the required Foster-Lyapunov function $V$ satisfying \eqref{normlikecondlyapcond} under some conditions. In fact, for many systems biology networks, a linear Foster-Lyapunov function of the form
\begin{align}
\label{linearlyapunovfunction}
V(x) = 1 + \langle v,x\rangle
\end{align}
can be constructed \cite{GuptaPLOS}. Here $\langle \cdot,\cdot \rangle$ denotes the standard inner product on $\R^d$ and $v \in \R^d$ is a component-wise positive vector that can be found via linear programming. Note that for this linear Foster-Lyapunov function, condition \eqref{normlikecondlyapcond} can be equivalently expressed as
\begin{align}
\label{normlikecondlyapcond2}
\sum_{k=1}^K \lambda_k(x) \langle v,\zeta_k \rangle \leq (C_1 - C_2) - C_2  \langle v,x\rangle  \qquad \textnormal{for all} \qquad x \in \mathcal{E}.
\end{align}
Often for typical reaction networks, in addition to this condition we also have that for some $C_3,C_4 > 0$
 \begin{align}
\label{normlikecondlyapcond3}
\sum_{k=1}^K \lambda_k(x) \langle v,\zeta_k \rangle^2  \leq C_3 + C_4  \langle v,x\rangle  \qquad \textnormal{for all} \qquad x \in \mathcal{E},
\end{align}
which ensures that all statistical moments of the stationary distribution $\pi$ are finite (see Theorem 5 in \cite{GuptaPLOS}).

\subsection{The Stationary Finite State Projection Algorithm} \label{subsec:sfspmethod} 

The classical FSP algorithm \cite{FSP} introduces an \emph{absorbing} state into the dynamics, in order to restrict it to a finite truncated subset. Therefore even though FSP is very successful in solving the CME over finite time-intervals, it cannot be used to estimate the stationary distribution $\pi$, as all the probability-mass will flow into the absorbing state at steady-state (see \cite{sFSP} for more details).

\begin{figure}[h]
	\label{sFSP_idea}
		\includegraphics[width=0.8\textwidth]{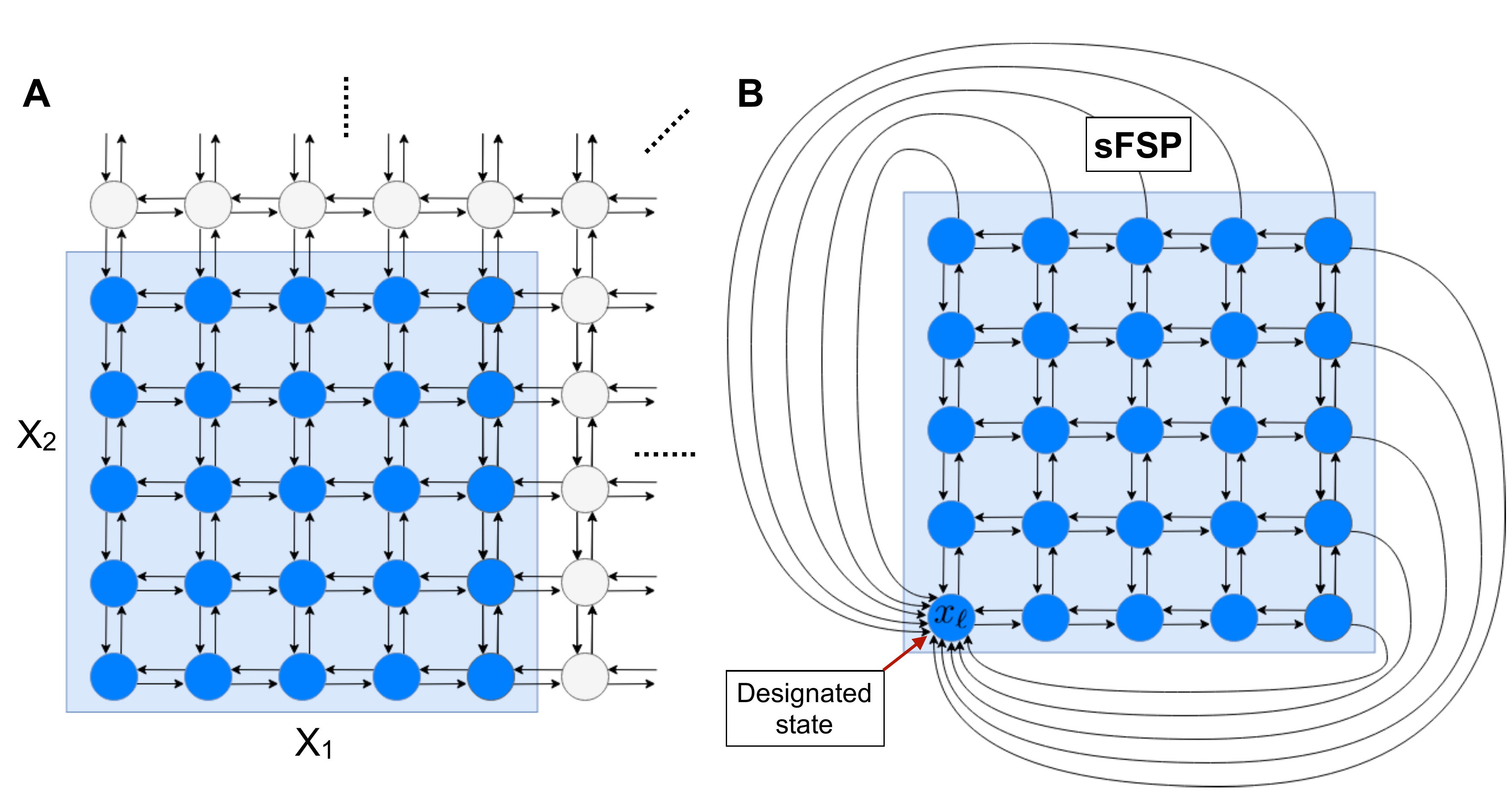}
	\caption{In (A) we depict the infinite state-space $\mathcal{E} = \N^2_0$ of a two-species network. The shaded area represents a finite truncated state-space, on which the dynamics is projected in the \emph{stationary Finite State Projection} (sFSP) method, by directing all the outgoing transitions to some designated state $x_\ell$ (see (B)). }
\end{figure}

Motivated by this problem, recently the \emph{stationary Finite State Projection} (sFSP) algorithm \cite{sFSP} was proposed, that instead of using an absorbing state to capture all the outgoing transitions from the truncated state-space, forces all the outgoing transitions to flow into a \emph{designated} state within the truncated state-space (see Figure \ref{sFSP_idea}). More formally, suppose $\mathcal{E}_n =\{ x_{j_1} ,\dots, x_{j_n}\} \subset \mathcal{E}$ is the truncated state-space of size $n$, and let $Q_n$ be the matrix formed by considering only the rows and columns of $Q$ that are indexed by $j_1,\dots,j_n$. Furthermore, let $c_n $ be the $n$-dimensional nonnegative column vector whose $i$-th component is the sum of the rates of all the reactions that originate at state $x_i$ and terminate at a state outside $\mathcal{E}_n$,
\begin{align*}
c_{n,i} = \sum_{k=1}^K \ind_{ \{ (x_{j_i} +\zeta_k) \notin \mathcal{E}_n \} } \lambda_k(x_{j_i}). 
\end{align*}
Assuming that the $\ell$-th state $x_{j_\ell}$ is the designated state, in the sFSP algorithm the CTMC dynamics on the truncated state-space $\mathcal{E}_n$ has the transition rate matrix given by
\begin{align}
\label{transratematrixsfsp}
\bar{Q}_n = Q_n + c_n e^T_\ell,
\end{align}
where $e_\ell$ is the $n \times 1$ vector whose $\ell$-th component is $1$ and the rest are zeros. The stationary distribution $\pi$ is then estimated by computing the finite-dimensional stationary distribution $\pi_n$ on $\mathcal{E}_n$ for the transition rate matrix $\bar{Q}_n$. This can be done by solving the linear system \eqref{fixedpteqn} with $Q$ and $\pi$ replaced by $\bar{Q}_n$ and $\pi_n$ respectively.

We now make some assumptions which are required to certify the accuracy of sFSP. 
\begin{assumption}
\label{assmp:main}
\item[(A)] The state-space $\mathcal{E}$ is irreducible for the original CTMC with transition rate matrix $Q$.
\item[(B)] There exists a Foster-Lyapunov function $V : \mathcal{E} \to [1,\infty)$ satisfying \eqref{normlikecondlyapcond}.
\item[(C)] $\{ \mathcal{E}_n : n=1,2,\dots \}$ is a sequence of increasing finite sets (i.e. $n \leq m$ implies $\mathcal{E}_n \subset \mathcal{E}_m$) which converge to the full state-space $\mathcal{E}$ in the limit $n\to\infty$.
\end{assumption}
Under these assumptions it can be shown (see Theorem 3.1 in \cite{sFSP}) that the stationary distribution $\pi_n$ for the projected CTMC on truncated state-space $\mathcal{E}_n$ exists uniquely and it converges to the stationary distribution $\pi$ for the original CTMC in the $\ell_1$ metric
\begin{align}
\label{limitresult}
\lim_{n \to \infty} \| \pi -  \pi_n \|_{ \ell_1} = 0.
\end{align}
This shows that for a large enough truncated state-space $\mathcal{E}_n$, the sFSP-estimated stationary distribution $\pi_n$ is likely to be ``close" to the true stationary distribution $\pi$. More precise estimates on the approximation error can be found in \cite{sFSP} and it involves the \emph{outflow} rate, measured as $\langle c_n, \pi_n \rangle$, at the estimated stationary distribution.

\section{Steady-State Sensitivity Estimation} \label{sec:sensEst}

Consider the stochastic model of a reaction network, as described in Section \ref{subsec:stochasticmodel}, and suppose that in addition to the state $x$ the propensity functions $\lambda_k$-s depend on a scalar parameter $\theta$ and the mapping $\theta \mapsto \lambda_k(x,\theta)$ is differentiable for each fixed $x$. Replacing each $\lambda_k(x)$ by $\lambda_k(x,\theta)$ in \eqref{defn:ctmc_gen}, we obtain the generator $\mathbb{Q}_\theta$ of the $\theta$-dependent CTMC $( X_\theta(t) )_{ t \geq 0 }$. 

In this section, we present the main contributions of this paper, which is developing a method to estimate the steady-state sensitivity $S_\theta(f)$ defined by the limit \eqref{defn_ssparamsens}. In the next section we prove that this limit exists under certain assumptions, and connect it with the solution of a Poisson Equation for the generator $\mathbb{Q}_\theta$.

\subsection{An explicit expression for steady-state sensitivity} \label{subsec:explFormula}

Suppose parts (A) and (B) of Assumption \ref{assmp:main} hold for the original CTMC with generator $\mathbb{Q}_\theta$, and hence the state-space $\mathcal{E}$ is irreducible and suppose $V$ is the Foster-Lyapunov function satisfying \eqref{normlikecondlyapcond}. Under these assumptions this CTMC is exponentially ergodic with a unique stationary distribution $\pi_\theta$. Let $f : \mathcal{E} \to \R$ be the output function for which the steady-state sensitivity $S_\theta(f)$ needs to be computed. The Poisson equation for the pair $( \mathbb{Q}_\theta,f )$ is given by
\begin{align}
\label{defn_poiss_eqn1}
- \mathbb{Q}_\theta g = f - \langle f,\pi_\theta \rangle,
\end{align}
where the inner product $\langle f,\pi_\theta \rangle$ is defined as in \eqref{defn_inner_prod_measure}. It is known that the solution $g$ of this Poisson equation, if it exists, is unique up to addition by a constant function.

We next present our main result in this paper which connects the steady-state sensitivity $S_\theta(f)$ to the solution of the Poisson equation \eqref{defn_poiss_eqn1}. Henceforth, we call a function $f : \mathcal{E}  \to \R$ \emph{polynomially growing}, if there exist constants $C,r>0$ such that
\begin{align*}
| f(x)  | \leq C(  1 + \| x\|^r) \qquad \textnormal{for each} \qquad x \in \mathcal{E},
\end{align*}
where $\|\cdot\|$ denotes the standard norm on $\R^d$.
\begin{theorem}
\label{theorem:main}
Suppose parts (A) and (B) of Assumption \ref{assmp:main} hold for the CTMC with generator $\mathbb{Q}_\theta$, and with a linear Foster-Lyapunov function $V$ \eqref{linearlyapunovfunction} satisfying \eqref{normlikecondlyapcond2} and \eqref{normlikecondlyapcond3}. Furthermore assume that the output function $f$ and $\theta$-derivatives of the propensity functions (i.e. $\partial \lambda_k/\partial \theta$ for $k=1,\dots,K$) are polynomially growing. Then there exists a solution $g$ of the Poisson equation \eqref{defn_poiss_eqn1} and the steady-state sensitivity is given by
\begin{align}
\label{exact_sens_formula}
S_\theta(f) = \sum_{k=1}^K \sum_{x \in \mathcal{E}}  \frac{ \partial \lambda_k(x ,\theta) }{ \partial \theta } (g(x+\zeta_k) -g(x))  \pi_\theta(x)
\end{align}
where $\pi_\theta$ is the stationary distribution of the CTMC.
\end{theorem}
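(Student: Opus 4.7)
The plan is to split the argument into two pieces: first, construct a solution $g$ of the Poisson equation \eqref{defn_poiss_eqn1} with controlled growth, and second, derive the explicit formula \eqref{exact_sens_formula} by perturbing the stationarity identity $\langle \mathbb{Q}_\theta g, \pi_\theta\rangle = 0$ in $\theta$.

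For the existence part, let $P_t^\theta$ denote the Markov semigroup generated by $\mathbb{Q}_\theta$ and define
\[
g(x) \;:=\; \int_0^\infty \bigl( P_t^\theta f(x) - \langle f,\pi_\theta\rangle\bigr)\,dt .
\]
The linear Foster--Lyapunov condition \eqref{normlikecondlyapcond2}, supplemented by the moment condition \eqref{normlikecondlyapcond3}, is the classical Meyn--Tweedie input yielding $V$-uniform exponential ergodicity in the form $|P_t^\theta h(x) - \langle h,\pi_\theta\rangle|\le C_h(1+V(x)^r)e^{-\rho t}$ for polynomially growing $h$ and an appropriate power $r$ (cf.\ \cite{Meyn,GuptaPLOS}). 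Applied to $h=f$, this justifies absolute convergence of the integral, yields the growth estimate $|g(x)|\le C'(1+V(x)^r)$, and, by differentiating under the integral, produces $\mathbb{Q}_\theta g = -(f-\langle f,\pi_\theta\rangle)$.

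For the formula, I would use the fixed-point identity $\langle \mathbb{Q}_{\theta'} g,\pi_{\theta'}\rangle = 0$ at a perturbed parameter $\theta'$ together with the Poisson equation at $\theta$ to write
\begin{align*}
\langle f,\pi_{\theta'}\rangle - \langle f,\pi_\theta\rangle
&= \langle f-\langle f,\pi_\theta\rangle,\pi_{\theta'}\rangle
   = -\langle \mathbb{Q}_\theta g,\pi_{\theta'}\rangle \\
&= \langle (\mathbb{Q}_{\theta'}-\mathbb{Q}_\theta)g,\pi_{\theta'}\rangle .
\end{align*}
Dividing by $\theta'-\theta$ and passing to the limit, the difference quotient inside the bracket converges pointwise to $\partial_\theta \mathbb{Q}_\theta g(x) = \sum_k \partial_\theta \lambda_k(x,\theta)\bigl(g(x+\zeta_k)-g(x)\bigr)$, while the measure $\pi_{\theta'}$ tends to $\pi_\theta$. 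Combined with the identity $S_\theta(f) = \partial_\theta \langle f,\pi_\theta\rangle$ recalled from \cite{Gupta3}, this yields \eqref{exact_sens_formula}.

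The main obstacle is rigorously justifying the limit exchange in the last step. Three ingredients are needed: (i) uniform-in-$\theta'$ polynomial moment bounds on $\pi_{\theta'}$ over a neighborhood of $\theta$, which follow from \eqref{normlikecondlyapcond2}--\eqref{normlikecondlyapcond3} provided the constants $C_1,\dots,C_4$ can be chosen uniformly there (continuity of the propensities in $\theta$); (ii) weak convergence of $\pi_{\theta'}$ to $\pi_\theta$ tested against the polynomially growing integrand $\partial_\theta \mathbb{Q}_\theta g$, for which the $V$-uniform ergodicity bound combined with (i) supplies the uniform integrability; and (iii) control on the remainder $\bigl[(\mathbb{Q}_{\theta'}-\mathbb{Q}_\theta)/(\theta'-\theta)-\partial_\theta \mathbb{Q}_\theta\bigr]g$ so that dominated convergence applies, which rests on the assumed polynomial growth of $\partial_\theta \lambda_k$ and of $g$ from the first step.
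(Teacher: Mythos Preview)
Your argument is correct but follows a genuinely different route from the paper's. The paper does \emph{not} perturb the stationarity identity in $\theta$; instead it works at the fixed parameter throughout. It invokes an explicit integral representation of the finite-time sensitivity (Theorem~3.3 of \cite{gupta2018estimation}),
\[
S_\theta(f,t)=\sum_{k}\E\Bigl[\int_0^t \partial_\theta\lambda_k(X_\theta(s),\theta)\bigl(\Psi(X_\theta(s)+\zeta_k,f,t-s)-\Psi(X_\theta(s),f,t-s)\bigr)\,ds\Bigr],
\]
with $\Psi(y,f,u)=\E[f(Y_\theta(u))\mid Y_\theta(0)=y]$, bounds the integrand via $V_m$-exponential ergodicity, and lets $t\to\infty$ (using a double limit $u,t\to\infty$ to replace the law of $X_\theta(s)$ by $\pi_\theta$). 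The identification $\int_0^\infty(\Psi(x+\zeta_k,f,s)-\Psi(x,f,s))\,ds=g(x+\zeta_k)-g(x)$ is then obtained from Dynkin's formula applied to the Poisson solution $g$. The trade-off: the paper's proof is self-contained regarding the existence of the limit \eqref{defn_ssparamsens} and never needs continuity of $\theta'\mapsto\pi_{\theta'}$ or uniformity of the Lyapunov constants in a $\theta$-neighborhood---your ingredients (i)--(iii). Conversely, your perturbation argument is algebraically more transparent and avoids the nontrivial finite-time sensitivity formula from \cite{gupta2018estimation}, at the cost of importing the identity $S_\theta(f)=\partial_\theta\langle f,\pi_\theta\rangle$ from \cite{Gupta3} and the uniform-in-$\theta'$ estimates. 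One small caveat: your use of $\langle \mathbb{Q}_{\theta'}g,\pi_{\theta'}\rangle=0$ for the polynomially growing $g$ tacitly requires the propensities $\lambda_k(\cdot,\theta')$ themselves---not just their $\theta$-derivatives---to be polynomially growing so that the rearrangement is justified; this holds for mass-action kinetics but is not explicitly among the theorem's hypotheses.
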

\begin{remark} \label{remark1}
Note that the sensitivity value $S_\theta(f) $ is independent of the choice of the solution $g$ of the Poisson equation \eqref{defn_poiss_eqn1}, as all such solutions differ by a constant function and hence the differences $g(x+\zeta_k) -g(x)$ are the same for all the solutions.
\end{remark}
\begin{proof}
Let $V$ be the linear Foster-Lyapunov function \eqref{linearlyapunovfunction} satisfying \eqref{normlikecondlyapcond2} and \eqref{normlikecondlyapcond3}. Corresponding to $V$ we can define the $V$-norm of any function $h:\mathcal{E} \to \R$ as 
\begin{align}
\label{defn_v_norm}
\| h\|_V = \sup_{x \in \mathcal{E}} \frac{ | h(x) | }{V(x)}.
\end{align}
It is known that if function $f$ has finite $V$-norm (i.e. $\|f\|_V < \infty$), then there exists a solution $g$ of the Poisson equation \eqref{defn_poiss_eqn1} which also has finite $V$-norm (see Theorem 2.3\footnote{This result has been proved for discrete-time Markov chains but it can be easily transformed to the continuous-time setting using the resolvent operator as mentioned in the proof of Proposition \ref{proposition:approx}.} in \cite{glynn1996liapounov}).

In our case the output function $f$ is polynomially growing and so $\|f\|_V$ may not be finite. However for any positive integer $m$, if we define a function $V_m : \mathcal{E} \to [1,\infty)$ by
\begin{align}
\label{genfosterlyapfunction}
V_m(x) = 1 + \langle v,x \rangle^m,
\end{align}
then it can be seen from the proof of Theorem 5 in \cite{GuptaPLOS} that, under conditions \eqref{normlikecondlyapcond2} and \eqref{normlikecondlyapcond3}, $V_m$ also serves as a Foster-Lyapunov function for the dynamics (i.e.\ it satisfies \eqref{normlikecondlyapcond} for generator $\mathbb{Q}_\theta$). Moreover from Theorem 4.2 in \cite{Meyn} it can be concluded that the stationary distribution $\pi_\theta$ is such that
\begin{align}
\label{stationarydistributionnbound}
\| \pi_\theta \|_{ V_m}:= \sum_{x \in \mathcal{E}} V_m(x) \pi_\theta(x) < \infty.
\end{align}
As our function $f$ is polynomially growing, we have $\| f \|_{ V_m} < \infty$ for some positive integer $m$, and hence there is a solution $g$ of the Poisson equation \eqref{defn_poiss_eqn1} which also satisfies $\| g \|_{ V_m} < \infty$. Similarly as $\theta$-derivatives of the propensity functions are polynomially growing as well, they also have finite $V_m$-norm for some $m$. As \eqref{stationarydistributionnbound} holds for any $m$, it can be seen that the r.h.s.\ of \eqref{exact_sens_formula} is finite even though the state-space $\mathcal{E}$ may be countably infinite.

We now prove that the steady-state sensitivity $S_\theta(f)$, defined by \eqref{defn_ssparamsens}, is equal to the r.h.s.\ of \eqref{exact_sens_formula}. Let $( Y_\theta(t) )_{t \geq 0}$ be a CTMC with generator $\mathbb{Q}_\theta$ and for any state $y \in \mathcal{E}$ and polynomially growing function $h:\mathcal{E} \to \R$ let
\begin{align*}
\Psi(y,h,t) = \E\left( h( Y_\theta(t) ) \vert Y_\theta(0) =y \right).
\end{align*}
Under conditions \eqref{normlikecondlyapcond2} and \eqref{normlikecondlyapcond3}, this expectation is finite. Furthermore the limit
\begin{align*}
\lim_{t \to \infty}  \Psi(y,h,t) = \langle h,\pi_\theta \rangle
\end{align*}
holds (see Proposition S2.2 in \cite{GuptaPLOS}), where the inner product between function $h$ and stationary distribution $\pi_\theta$ is defined as in \eqref{defn_inner_prod_measure}. Moreover since $h$ is polynomially growing, there exist constants $c,\beta > 0$ and positive integer $m$ such that 
\begin{align}
\label{fexponentialergodicitycriterion}
| \Psi(y,h,t) - \langle h,\pi_\theta \rangle | \leq  \beta V_m(y) e^{ -c t} \quad \textnormal{for all} \quad t \geq 0 \quad \textnormal{and} \quad y \in \mathcal{E},
\end{align}
where $V_m$ is the Foster-Lyapunov function given by \eqref{genfosterlyapfunction}. This relation follows from Theorem 6.1 in \cite{Meyn}.

Let $S_\theta(f,t)$ be the finite-time sensitivity value defined by \eqref{defn_paramsens} for the CTMC $( X_\theta(t) )_{t \geq 0}$ with generator $\mathbb{Q}_\theta$ and initial state $x_0$. From Theorem 3.3 in \cite{gupta2018estimation}, we can express $S_\theta(f,t)$ as
\begin{align}
\label{finitesensitivityexpression}
S_\theta(f,t) = \sum_{k=1}^K \E\left[ \int_0^t \frac{ \partial \lambda_k( X_\theta(s) ,\theta ) }{ \partial \theta } \left(  \Psi(X_\theta(s) +\zeta_k ,f,t - s) - \Psi(X_\theta(s),f,t - s)    \right) ds  \right].
\end{align}
Using \eqref{fexponentialergodicitycriterion} we know that for some constants $c,\beta > 0$ and $m$
\begin{align}
\label{differenceofpsis}
&|  \Psi(X_\theta(s) +\zeta_k ,f,t - s) - \Psi(X_\theta(s),f,t - s) | \notag \\
& \leq |  \Psi(X_\theta(s) +\zeta_k ,f,t - s) -  \langle f,\pi_\theta \rangle | + |  \Psi(X_\theta(s) ,f,t - s) -  \langle f,\pi_\theta \rangle |  \notag \\
& \leq \beta( V_m(X_\theta(s) +\zeta_k) +V_m(X_\theta(s) )  ) e^{ -c(t-s)}.
\end{align}
This inequality implies that the integrand in \eqref{finitesensitivityexpression} can be bounded by the product of a polynomially growing function of the state $X_\theta(s)$ and an exponential decay function. Since the moments of the CTMC $( X_\theta(t) )_{t \geq 0}$ can be uniformly bounded over the infinite time-interval $[0,\infty)$ (see Theorem 2 in \cite{GuptaPLOS}), the integral on the r.h.s.\ of \eqref{differenceofpsis} exists as $t \to \infty$ and so we can write the steady-state sensitivity as
\begin{align*}
S_\theta(f) &= \lim_{t \to \infty} S_\theta(f,t) \\
&=  \lim_{t \to \infty} \sum_{k=1}^K \E\left[ \int_0^t \frac{ \partial \lambda_k( X_\theta(s) ,\theta ) }{ \partial \theta } \left(  \Psi(X_\theta(s) +\zeta_k ,f,t - s) - \Psi(X_\theta(s),f,t - s)    \right) ds  \right].
\end{align*}
Notice that due to \eqref{differenceofpsis} and uniform moment boundedness, for any $u\geq 0$ we have
 \begin{align*}
 \lim_{t \to \infty} \sum_{k=1}^K \E\left[ \int_0^u \frac{ \partial \lambda_k( X_\theta(s) ,\theta ) }{ \partial \theta } \left(  \Psi(X_\theta(s) +\zeta_k ,f,t - s) - \Psi(X_\theta(s),f,t - s)    \right) ds  \right] = 0,
\end{align*}
and therefore by exploiting the linearity of the integration and expectation operators we can represent the steady-state sensitivity as
\begin{align}
\label{ss_sensitvitiy_form2}
&S_\theta(f)  \\
&=   \lim_{u \to \infty} \lim_{t \to \infty} \sum_{k=1}^K \E\left[ \int_u^t \frac{ \partial \lambda_k( X_\theta(s) ,\theta ) }{ \partial \theta } \left(  \Psi(X_\theta(s) +\zeta_k ,f,t - s) - \Psi(X_\theta(s),f,t - s)    \right) ds  \right].\notag
\end{align}
Letting $p_\theta(s,x) = \P( X_\theta(s) =x )$ we obtain
\begin{align*}
S_\theta(f) & =  \lim_{u \to \infty} \lim_{t \to \infty} \sum_{k=1}^K  \sum_{x \in \mathcal{E} } \frac{ \partial \lambda_k( x,\theta ) }{ \partial \theta }  \left[ \int_u^t \left(  \Psi(x +\zeta_k ,f,t - s) - \Psi(x,f,t - s)    \right) p_\theta(s,x) ds  \right] \\
& =  \lim_{u \to \infty} \lim_{t \to \infty} \sum_{k=1}^K  \sum_{x \in \mathcal{E} } \frac{ \partial \lambda_k( x,\theta ) }{ \partial \theta }  \left[ \int_0^{t-u} \left(  \Psi(x +\zeta_k ,f,s) - \Psi(x,f,s)    \right) p_\theta(t-s,x) ds  \right] \\
& =    \sum_{k=1}^K  \sum_{x \in \mathcal{E} } \frac{ \partial \lambda_k( x,\theta ) }{ \partial \theta }  \left[ \int_0^{\infty} \left(  \Psi(x +\zeta_k ,f,s) - \Psi(x,f,s) \right)  ds    \right] \pi_\theta(x) 
\end{align*}
where the last relation follows because \eqref{fexponentialergodicitycriterion} holds for any $m$, and $(t-s) \geq u$ for $s \in [0,t-u]$ and hence if $u$ is large $p_\theta(t-s,x)$ is close to the stationary distribution $\pi_\theta$ in the norm $\| \cdot\|_{V_m}$ defined by \eqref{stationarydistributionnbound}.

Now to prove this theorem it suffices to show that
\begin{align}
\label{resolventidentity}
 \left[ \int_0^{\infty} \left(  \Psi(x +\zeta_k ,f,s) - \Psi(x,f,s) \right)  ds    \right] = g(x+\zeta_k) -g(x) \quad \textnormal{for each} \quad x \in \mathcal{E}
\end{align}
where $g$ is a solution of the Poisson equation \eqref{defn_poiss_eqn1}. Let $( Y_1(t) )_{t \geq 0 }$ and $( Y_2(t) )_{t \geq 0 }$ be two CTMCs with generator $\mathbb{Q}_\theta$ and initial states $x$ and $(x+\zeta_k)$ respectively. Then applying Dynkin’s formula (see \cite{EK}) on function $g$ we obtain
\begin{align}
\label{dynkineqn1}
\E(g( Y_1(t) ) ) &= g(x) + \E \left [ \int_{0}^t \mathbb{Q}_\theta g( Y_1(s) )ds  \right] \notag \\
&= g(x) + \langle f, \pi_\theta  \rangle t -  \E \left [ \int_{0}^t f( Y_1(s) )ds  \right] \notag \\
&= g(x) + \langle f, \pi_\theta  \rangle t -  \int_{0}^t \Psi(x,f,s) ds   
\end{align}
and
\begin{align}
\label{dynkineqn2}
\E(g( Y_2(t) ) ) &= g(x+\zeta_k) + \E \left [ \int_{0}^t \mathbb{Q}_\theta g( Y_2(s) )ds  \right]  \notag \\
&= g(x+\zeta_k) + \langle f, \pi_\theta  \rangle t -  \E \left [ \int_{0}^t f( Y_2(s) )ds  \right]   \notag \\
& = g(x+\zeta_k) + \langle f, \pi_\theta  \rangle t -   \int_{0}^t \Psi(x+\zeta_k,f,s) ds.  
\end{align}
Subtracting \eqref{dynkineqn1} from \eqref{dynkineqn2} and rearranging we get
\begin{align}
\label{dynkineqn3}
g(x +\zeta_k) - g(x) = \int_{0}^t \left( \Psi(x+\zeta_k,f,s) -  \Psi(x,f,s) \right) ds + \E(g( Y_2(t) ) ) -\E(g( Y_1(t) ) ). 
\end{align}
Due to ergodicity $\lim_{t \to \infty} \E(g( Y_2(t) ) ) = \lim_{t \to \infty} \E(g( Y_1(t) ) ) = \langle g, \pi_\theta \rangle$, and hence letting $t \to \infty$ in \eqref{dynkineqn3} proves \eqref{resolventidentity} and completes the proof of this theorem.
\end{proof}

\subsection{Approximation of steady-state sensitivity} \label{subsec:sensapprox}

We mentioned before, that in most examples of interest the state-space $\mathcal{E}$ is infinite in size. Hence the stationary distribution $\pi_\theta$ cannot be exactly computed, and the problem of solving the Poisson equation \eqref{defn_poiss_eqn1} becomes infinite-dimensional. Due to these issues the exact formula \eqref{exact_sens_formula} cannot be directly used to estimate the steady-state sensitivity value, and we need to replace the stationary distribution $\pi_\theta$ and the Poisson equation solution $g$ with its approximations. Let $\mathcal{E}_n = \{x_{j_1},\dots, x_{j_n}\}$ be the truncated state-space as in part (C) of Assumption \ref{assmp:main}, and let $\pi_{n,\theta}$ be the sFSP estimated approximation (see Section \ref{subsec:sfspmethod}) of $\pi_\theta$ with $x_{j_\ell}$ as the designated state. We can view $\pi_{n,\theta}$ as a column-vector which sums to one and is in the left null-space of the matrix $\bar{Q}_{n,\theta}$ defined by \eqref{transratematrixsfsp}.

We define the finite-dimensional Poisson equation, analogous to \eqref{defn_poiss_eqn1} as
\begin{align}
\label{defn_poiss_eqn2}
- \bar{Q}_{n,\theta}  g_n = f_n - \langle f_n, \pi_{n,\theta} \rangle {\bf 1},
\end{align}
where ${\bf 1}$ is the $ n \times 1$ vector of all ones, and $f_n$ denotes the $ n \times 1$ vector given by $f_n = (f(x_{j_1}),\dots,  f(x_{j_n}))$. A solution $g_n$ of this Poisson equation exists because $\pi_{n,\theta}$ is orthogonal to the r.h.s. of \eqref{defn_poiss_eqn2} and it spans the one-dimensional left null-space of matrix $\bar{Q}_{n,\theta} $. This null-space is one-dimensional because the CTMC specified by transition-rate matrix $\bar{Q}_{n,\theta}$ has a unique stationary distribution over the state-space $\mathcal{E}_n$. This uniqueness follows from the fact that the construction of $\bar{Q}_{n,\theta} $ ensures that the truncated state-space $\mathcal{E}_n$ is irreducible for the corresponding CTMC dynamics (for details see Section III.A in \cite{sFSP}). As for the original Poisson equation \eqref{defn_poiss_eqn1}, a solution $g_n$ of the finite-dimensional Poisson equation \eqref{defn_poiss_eqn2}, is unique up to addition by a constant vector. Using a solution $g_n$ we define the approximate steady-state sensitivity as     
\begin{align}
\label{approx_sens_formula}
S_{n,\theta}(f) =  \sum_{k=1}^K \sum_{x \in \mathcal{E}_n}  \frac{ \partial \lambda_k(x ,\theta) }{ \partial \theta } (g_n(x+\zeta_k) -g_n(x))  \pi_{n,\theta}(x),
\end{align}
where we set $(x +\zeta_k)$ to the designated state $x_{j_\ell}$ whenever state $x$ is in the boundary of $\mathcal{E}_n$ denoted by 
\begin{align*}
\textnormal{Bdry}( \mathcal{E}_n ) = \{ x \in \mathcal{E}_n :  \lambda_k(x) > 0 \textnormal{ and } (x +\zeta_k) \notin \mathcal{E}_n \textnormal{ for some } k=1,\dots,K \}.
\end{align*}
Note that in this formula we regard $g_n$ and $\pi_{n,\theta}$ as real-valued functions over $\mathcal{E}_n$. The next proposition shows that for a large enough truncated state-space $\mathcal{E}_n$, $S_{n,\theta}(f) $ is an accurate approximation of the true steady-state sensitivity value $S_{\theta}(f)$.

 \begin{proposition}
\label{proposition:approx}
Suppose part (C) of Assumption \ref{assmp:main} holds along with all the conditions of Theorem \ref{theorem:main}. Furthermore assume that the sequence of truncated state-spaces $\{ \mathcal{E}_n \}$ grows uniformly w.r.t.\ to the Foster-Lyapunov function $V$ in the sense that there exists a constant $\theta \in (0,1)$ such that
\begin{align*}
\min_{ x \in \textnormal{Bdry}( \mathcal{E}_n ) } V(x) \geq \theta \max_{ x \in \textnormal{Bdry}( \mathcal{E}_n ) } V(x) \quad \textnormal{for all} \quad n=1,2,\dots.
\end{align*}
Then letting $S_{n,\theta}(f)$ be defined by \eqref{approx_sens_formula}, we have 
\begin{align}
\label{sensconvrelationmain}
\lim_{n \to \infty}  S_{n,\theta}(f) =  S_{\theta}(f).
\end{align}
\end{proposition}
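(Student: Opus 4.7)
The plan is to bound the difference $|S_{n,\theta}(f) - S_\theta(f)|$ by comparing \eqref{approx_sens_formula} with \eqref{exact_sens_formula} via three separate convergence estimates: (i) tail truncation of the infinite sum, (ii) convergence of the stationary distributions in a weighted norm, and (iii) convergence of the Poisson--equation solutions. Throughout I would work with the Foster--Lyapunov functions $V_m$ of \eqref{genfosterlyapfunction}, choosing $m$ large enough that both $f$, $\partial\lambda_k/\partial\theta$ and the solution $g$ of \eqref{defn_poiss_eqn1} have finite $V_m$-norm (as justified inside the proof of Theorem \ref{theorem:main}), and relying on the uniform moment bound $\langle V_m,\pi_\theta\rangle <\infty$.

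For step (i), because the summand in \eqref{exact_sens_formula} is bounded by $C V_m(x)\pi_\theta(x)$ for some $m,C$, and $V_m$ is $\pi_\theta$-summable, the contribution of $x\notin\mathcal{E}_n$ vanishes as $n\to\infty$ since $\mathcal{E}_n\uparrow\mathcal{E}$. For step (ii), I would upgrade the $\ell_1$-convergence \eqref{limitresult} to $V_m$-weighted convergence, $\sum_{x\in\mathcal{E}_n} V_m(x)|\pi_{n,\theta}(x)-\pi_\theta(x)|\to 0$; this follows from Theorem~3.1 in \cite{sFSP} once one shows that $\{\pi_{n,\theta}\}$ is $V_m$-tight. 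The $V_m$-tightness in turn is a consequence of the uniform growth assumption on $\{\mathcal{E}_n\}$ with respect to $V$, which prevents $\pi_{n,\theta}$ from placing non-negligible mass on the truncation boundary and propagates the Foster--Lyapunov moment bounds \eqref{normlikecondlyapcond}--\eqref{normlikecondlyapcond3} uniformly in $n$.

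Step (iii) is the crux. I would represent both Poisson solutions by their resolvent forms
\begin{align*}
g(x) -\langle g,\pi_\theta\rangle &= \int_0^\infty \bigl(\Psi(x,f,s) - \langle f,\pi_\theta\rangle\bigr)\,ds,\\
g_n(x) -\langle g_n,\pi_{n,\theta}\rangle &= \int_0^\infty \bigl(\Psi_n(x,f_n,s) - \langle f_n,\pi_{n,\theta}\rangle\bigr)\,ds,
\end{align*}
where $\Psi_n$ denotes the semigroup of the sFSP-truncated CTMC with generator $\bar{Q}_{n,\theta}$. Since the sensitivity formula depends only on the increments $g_n(x+\zeta_k)-g_n(x)$, the unknown constants are irrelevant (Remark \ref{remark1}). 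For any fixed $x\in\mathcal{E}$, the two CTMCs started at $x$ can be coupled so that they coincide up to the first time the original process exits $\mathcal{E}_n$; by the Foster--Lyapunov estimate \eqref{normlikecondlyapcond} this exit time tends to $\infty$ in probability as $n\to\infty$, while exponential ergodicity \eqref{fexponentialergodicitycriterion} makes the tails of both time integrals uniformly small. This yields pointwise convergence of the increments, together with a uniform $V_m$-bound that justifies dominated convergence. Combining (i)--(iii) with the triangle inequality and handling the boundary states $x\in\text{Bdry}(\mathcal{E}_n)$ (where $x+\zeta_k$ is replaced by $x_{j_\ell}$) through the vanishing boundary mass $\sum_{x\in \text{Bdry}(\mathcal{E}_n)}\pi_{n,\theta}(x)\to 0$ from step (ii), yields \eqref{sensconvrelationmain}.

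The main obstacle is step (iii): establishing that the boundary truncation inherent to $\bar{Q}_{n,\theta}$ does not distort the Poisson solution in the interior. This requires marrying the local coupling argument (in which the chains agree until the exit time from $\mathcal{E}_n$) with the global integrability provided by exponential ergodicity and the polynomial $V_m$-moment bounds, and is where the uniform growth condition on $\{\mathcal{E}_n\}$ does most of its work by preventing the ``resetting'' at the designated state $x_{j_\ell}$ from contaminating the resolvent integral.
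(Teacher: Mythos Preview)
Your decomposition into (i) tail truncation, (ii) $V_m$-weighted convergence of stationary measures, and (iii) pointwise convergence of Poisson solutions together with a uniform $V_m$-bound on $\{g_n\}$ is exactly the skeleton of the paper's proof. Where you diverge is in the mechanism for (iii). The paper does not couple the chains in continuous time; instead it passes to the discrete-time $\beta$-resolvent $R_\beta=\beta(\beta I-Q)^{-1}$, writes $g(x)=\beta^{-1}\sum_{j\ge 0}(\mu_x^T R_\beta^j-\pi^T)f$ and the analogous series for $g_n$ with $\bar R_{\beta,n}$, and then invokes (from \cite{Hart}) the uniform geometric bound $\|\mu_x^T \bar R_{\beta,n}^{\,j}-\pi_n^T\|_{V_m}\le C V_m(x)\rho^j$ together with convergence $\bar R_{\beta,n}\to R_\beta$ to conclude by dominated convergence. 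Your semigroup-integral representation and exit-time coupling are the continuous-time dual of this and are in principle viable, but you should be explicit that the tail control you need for the \emph{truncated} integral requires exponential-ergodicity constants for $\bar Q_{n,\theta}$ that are uniform in $n$; the cited bound \eqref{fexponentialergodicitycriterion} applies only to the original chain. The paper obtains this uniformity, and the uniform $V_m$-growth of $\{g_n\}$, from the fact that the Foster--Lyapunov constants for the projected chain are independent of $n$ (via \cite{Hart} and Theorem~2.3 of \cite{glynn1996liapounov}); your coupling route ultimately needs the same input, so it should be stated rather than assumed. The resolvent approach buys off-the-shelf access to these uniform estimates; your approach buys a more transparent probabilistic picture of why the boundary resetting at $x_{j_\ell}$ does not contaminate the interior solution.
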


\begin{proof}
In this proof we drop the subscript $\theta$ for convenience. Let $Q$ be the bi-infinite transition rate matrix (see \eqref{defn_transitionratematrix}) of the original CTMC with stationary distribution $\pi$ over state-space $\mathcal{E} = \{x_0,x_1,\dots\}$. Similarly let $\bar{Q}_n$ be the finite transition rate matrix (see \eqref{transratematrixsfsp}) of the CTMC projected over the truncated state-space $\mathcal{E}_n$ which has stationary distribution $\pi_n$. Without loss of generality we can assume that the truncated state-space comprises the first $n$ states in $\mathcal{E}$, i.e.\ $\mathcal{E} = \{x_0,\dots,x_{n-1}\}$, and the designated state is $x_\ell =x_0$. Henceforth let $g$ and $g_n$ be the unique solutions of the Poisson equations \eqref{defn_poiss_eqn1} and \eqref{defn_poiss_eqn2} respectively, with $\langle g,\pi\rangle  = \langle g_n, \pi_n \rangle = 0$ (see Remark \ref{remark1}).

As explained in the proof of Theorem \ref{theorem:main}, for any positive integer $m$, the function $V_m$ (see \eqref{genfosterlyapfunction}) serves as a Foster-Lyapunov function for the original CTMC dynamics. The same holds true for the projected CTMC dynamics with transition rate matrix $\bar{Q}_n$, and the constants $C_1$ and $C_2$ that appear in the corresponding Foster-Lyapunov condition \eqref{normlikecondlyapcond} are independent of $n$ (see \cite{Hart}). Appealing to this fact and Theorem 2.3 in \cite{glynn1996liapounov}, we can conclude that the family of solutions $\{g_n\}$ to the Poisson equation \eqref{defn_poiss_eqn2} is uniformly polynomially growing, i.e.\ there exist positive constants $C$ and $m$ such that for each $n$ and $x \in \mathcal{E}$
\begin{align}
\label{uniformgnolybound}
| g_n(x)  | \leq C(  1 + \| x\|^m). 
\end{align}
Moreover from Theorem 3.1 in \cite{sFSP} we know that for any $m$
\begin{align}
\label{stationarydistributionconvmnorm}
\lim_{n \to \infty} \| \pi - \pi_n \|_{ V_m } = 0,
\end{align}
where the norm $\| \cdot \|_{V_m}$ is as defined in \eqref{stationarydistributionnbound}. This convergence along with \eqref{uniformgnolybound} and the dominated convergence theorem, implies that in order to prove \eqref{sensconvrelationmain} it suffices to show point-wise convergence of $g_n$ to $g$ as $n \to \infty$, i.e.\
\begin{align}
\label{pointwisegnconv}
\lim_{n \to \infty} g_n(x) = g(x).
\end{align}
We shall establish this convergence with the help of the $\beta$-resolvent matrix (see \cite{Hart}), defined as
\begin{align*}
R_\beta = \beta ( \beta {\bf I} - Q )^{-1}
\end{align*}
where $\beta > 0$ and ${\bf I}$ is the identity matrix. It can be seen that $R_\beta$ is a positive matrix satisfying $R_\beta {\bf 1} = {\bf 1}$, $\pi^T R_\beta = \pi^T$ and
\begin{align}
\label{resolventidentifties}
R_\beta = {\bf I} + \beta^{-1} Q R_\beta = {\bf I} + \beta^{-1} R_\beta Q.
\end{align}
We can view $R_\beta$ as the probability transition matrix of a discrete time Markov chain over $\mathcal{E}$. The resolvent $\bar{R}_{\beta,n}$ corresponding to the projected CTMC with transition rate matrix $\bar{Q}_n$ and state-space $\mathcal{E}_n$, can be defined in a similar way.

Now select a state $x = x_i \in \mathcal{E}$ and let $\mu_x$ be the vector whose $i$-th coordinate is $1$ and the rest are all zeros. From the analysis in \cite{sFSP} we can conclude that for any positive integer $m$, there exist constants $C > 0$ and $\rho \in (0,1)$ such that 
\begin{align}
\label{mainresolventidentity}
\| \mu^T_x  R^{j}_\beta - \pi^T  \|_{V_m} \leq C V_m(x) \rho^j \qquad \textnormal{and} \qquad \| \mu^T_x  \bar{R}^{j}_{\beta,n} - \pi_n^T  \|_{V_m} \leq C V_m(x) \rho^j,
\end{align}
where $A^j$ denotes the $j$-th power of matrix $A$. Viewing functions $f$ and $g$ as vectors, we can rewrite Poisson equation \eqref{defn_poiss_eqn1} as
\begin{align*}
-Q g = f - \langle f, \pi \rangle {\bf 1}.
\end{align*}
Multiplying this equation by $R_\beta$ on the left and using \eqref{resolventidentifties} we obtain
\begin{align*}
({\bf I} - R_\beta) g = \beta^{-1} (f - \langle f, \pi \rangle {\bf 1}),
\end{align*}
which allows us to express $g(x)$ as
\begin{align}
\label{solnpoisseqn1}
g(x) = \mu^T_x g = \beta^{-1} \sum_{j=0}^\infty \mu^T_x R^j_\beta (f - \langle f, \pi \rangle {\bf 1}) =  \beta^{-1} \sum_{j=0}^\infty (\mu^T_x R^j_\beta  - \pi^T )f. 
\end{align}
This expression is well-defined due to \eqref{mainresolventidentity} and the fact that $f$ is polynomially growing. Moreover one can verify that $\langle g,\pi \rangle =0$. Similar to \eqref{solnpoisseqn1} we can represent the solution $g_n$ of the Poisson equation \eqref{defn_poiss_eqn2} as
\begin{align}
\label{solnpoisseqn2}
g_n(x) =   \beta^{-1} \sum_{j=0}^\infty (\mu^T_x \bar{R}^j_{\beta,n}  - \pi^T_n )f_n. 
\end{align}
Since \eqref{stationarydistributionconvmnorm} holds, the point-wise convergence of $f_n$ to $f$ as $n \to \infty$ is immediate and the convergence of the resolvent operator $\bar{R}_{\beta,n}$ to $R_{\beta}$ is proved in \cite{Hart}. Also notice that for some positive integer $m$, we have $\| f \|_{V_m} <\infty$ and $\sup_{n} \| f_n\|_{V_m} < \infty$, where the norm $\| \cdot \|_{V_m}$ is defined as in \eqref{defn_v_norm}. Hence using \eqref{mainresolventidentity} and the dominated convergence theorem we can conclude that \eqref{pointwisegnconv} holds for each state $x \in \mathcal{E}$ and this concludes the proof of this proposition. 
\end{proof}

\subsection{Solving the Poisson Equation \eqref{defn_poiss_eqn2}} \label{subsec:Poisson}

We now discuss how one can efficiently compute the steady-state sensitivity estimate $S_{n,\theta}(f) $ \eqref{approx_sens_formula}. We can obtain the approximation $\pi_{n,\theta}$ of the stationary distribution via sFSP \cite{sFSP}. Thereafter to compute $S_{n,\theta}(f) $ we need to find a solution to the finite-dimensional Poisson equation \eqref{defn_poiss_eqn2}. Note that this equation has a unique solution $g_n$, if we impose that $\langle g_n, {\bf 1} \rangle = 0$. We now focus on how this unique solution can be computed. For convenience, we drop the subscript $\theta$ from $\bar{Q}_{n,\theta}$ and denote this matrix as $\bar{Q}_n$ in this section. Observe that matrix $\bar{Q}_n$ is singular (as $\textnormal{Rank}(\bar{Q}_n) = n-1$) and usually very large in size. This poses difficulties in solving \eqref{defn_poiss_eqn2} with standard large-scale linear solvers that are currently in use (see Section \ref{subsec:code}). To overcome these issues, we solve the Poisson equation \eqref{defn_poiss_eqn2} using the \emph{Basis Function Method} (BFM) which we describe next. This method is approximate in nature, but it is computationally very feasible and it often yields a very accurate estimate of the solution of the Poisson equation. 

The idea of the BFM is very simple. We start by choosing a \emph{basis} of $m$ vectors (with $m \ll n$) $\mathcal{B} = \{g_1,\dots, g_m\}$ in $\R^n$. Then we construct the image of this basis under matrix $\bar{Q}_n$, by computing
\begin{align*}
f_i = - \bar{Q}_n g_i \quad \textnormal{for each} \quad i=1,\dots,m.
\end{align*}
Letting $ \bar{Q}_n (\mathcal{B}):= \{f_1,\dots, f_m\}$, our goal is to \emph{project} the r.h.s. of \eqref{defn_poiss_eqn2}, given by
\begin{align}
\label{rhsofpe}
\bar{f}_n = f_n - \langle f_n, \pi_{n,\theta} \rangle {\bf 1} 
\end{align}
on the linear space spanned by the vectors in $\bar{Q}_n( \mathcal{B})$. If this projection is given by
\begin{align}
\label{defn_hatf}
\hat{f}(c) = \sum_{i=1}^m c_i f_i
\end{align}
for some coefficient vector $c = (c_1,\dots,c_m) \in \R^m$, then our solution approximation is given by
\begin{align}
\label{desiredsolnapproxn}
\hat{g}(c) = \sum_{i=1}^m c_i g_i  -  \left( \sum_{i=1}^m c_i  \langle g_i, {\bf 1} \rangle \right) {\bf 1}.
\end{align}

To obtain the \emph{best} solution approximation, we construct the optimal projection $\hat{f}(c)$ by choosing the coefficient vector $c$ that minimizes the following \emph{error} function
\begin{align}
\label{defn_error_func}
\epsilon(c) = \left\| \pi_{n,\theta} \odot \left( \bar{f}_n - \hat{f}(c) \right) \right\|_{\ell_2},
\end{align}
where $\|\cdot\|_{\ell_2}$ denotes the standard $\ell_2$-norm and $\odot$ is the Hadamard (element-wise) vector product. In this error function, the discrepancy between $\bar{f}_n $ and its projection $\hat{f}(c)$ is weighted by the approximate stationary distribution $\pi_{n,\theta}$. This ensures that our projection is biased towards minimizing discrepancies in states with high stationary probabilities, which is important because contribution from these states is more dominant in the expression \eqref{approx_sens_formula} for the steady-state sensitivity $S_{n,\theta}(f)$.

Let ${\bf F} = \textnormal{Col}(f_1,\dots, f_m)$ be the $n \times m$ matrix whose columns are vectors in $\bar{Q}_n (\mathcal{B})$. Also let ${\bf \Pi}$ be the $n \times n$ diagonal matrix whose entries are given by the approximate stationary distribution $\pi_{n,\theta}$. The minimization of the error function $\epsilon(c)$ is essentially a weighted least-square problem whose optimal solution $c^*$ satisfies
\begin{equation}
\label{weightedLS}
{\bf A} c^*=  {\bf b}
\end{equation}
where ${\bf A }$ is the $m \times m$ matrix and ${\bf b}$ is the $m \times 1$ vector given by
\begin{align*}
{\bf A }= {\bf F^T  \Pi  F} \qquad \textnormal{and} \qquad {\bf b } = {\bf F^T \Pi } \ \bar{f}_n
\end{align*}
respectively. Since $m \ll n$, the $m$-dimensional linear-system \eqref{weightedLS} is much easier to solve than the Poisson equation \eqref{defn_poiss_eqn2}. If the optimal error $\epsilon( c^*)$ is below a small tolerance level, we can expect $\hat{g}(c^*)$ (see \eqref{desiredsolnapproxn}) to be an accurate solution of the Poisson equation \eqref{defn_poiss_eqn2}. On the other hand, if the error $\epsilon( c^*)$ is above the tolerance level, it can be reduced by expanding the set of basis functions $\mathcal{B}$ by including more vectors.  

Observe that the $n \times n$ matrix $\bar{Q}_n$ is \emph{extremely} sparse, as each row can have up to $(K+1)$ non-zero entries, where $K$ is the total number of reactions. The BFM essentially substitutes the problem of solving the Poisson equation \eqref{defn_poiss_eqn2} for a large sparse matrix, with the problem of solving a much smaller (but dense) linear system \eqref{weightedLS}. For this approach to work well, it is important that the set of basis vectors $\mathcal{B} = \{g_1,\dots, g_m\}$ is chosen in such a way that the linear span of the image set $ \bar{Q}_n ( \mathcal{B} )= \{f_1,\dots, f_m\}$, contains the vector $\bar{f}_n$ \eqref{rhsofpe}. While it is difficult to determine the basis vectors that meet this criterion, we find that for most biological examples of interest, this containment condition is fulfilled with a very small approximation error when we choose these basis vectors as representatives of \emph{monomials} of the form
\begin{align}
\label{defn_mon}
G(x_1,\dots,x_M) = \prod_{i=1}^M x^{e_i}_i 
\end{align}
where exponents $e_1,\dots,e_m$ are non-negative integers. Given such a monomial, its representative on the state-space $\mathcal{E}_n = \{x_{j_1},\dots, x_{j_n}\}$ is the $n$-dimensional vector $g= (G(x_{j_1}),\dots,G(x_{j_n})  )$. Hence for any positive integer $d_{ \textnormal{max}}$ specifying the maximal monomial degree, let $\mathcal{B}(d_{ \textnormal{max}})$ be the set of all basis vectors representing monomials with exponents $e_1,\dots,e_M$ satisfying $\sum_{i=1}^M e_i \leq d_{ \textnormal{max} }$. Letting $\binom{a}{b}:=\frac{a!}{b! (a-b)!}$ denote the binomial coefficient, the number of basis vectors in $\mathcal{B}(d_{ \textnormal{max}})$ is equal to
\begin{align*}
|\mathcal{B}(d_{ \textnormal{max}})| = \sum_{d=1}^{d_{ \textnormal{max}}} \binom{M-1 +d}{d},
\end{align*} 
which grows rapidly as the maximal degree $d_{ \textnormal{max} }$ and the number of species $M$ increases.

We refer to our steady-state sensitivity estimation method as the \emph{Poisson Estimator} (PE) as it is based on solving the Poisson Equation \eqref{defn_poiss_eqn2} associated with the stochastic reaction network. The main steps required by our method are summarized in Algorithm \ref{alg:pe}.

\begin{algorithm}[H]
\caption{{\bf Poisson Estimator (PE)}: Provides an estimate of the steady-state sensitivity \eqref{defn_ssparamsens}.}
\label{alg:pe}
\begin{algorithmic}[1]
\Require Reaction stoichiometries, propensities, parameters, and objective function $f$.
\Ensure Sensitivities of $f$ w.r.t. every parameter.
\Procedure{PE}{}
\State Set $n \leftarrow 0$ and $\mathcal{E}_0 := \emptyset$
\State Set $n \leftarrow n+1$ and choose a truncated state space $\mathcal{E}_n$ with $|\mathcal{E}_n| > |\mathcal{E}_{n-1}|$
\State Run sFSP to estimate stationary distribution $\pi_{\theta}$. Jump to step 3 and repeat with a larger truncated state-space if necessary.
\State Select $d_{\max}$ and construct the basis vectors
\State Solve the Poisson equation \eqref{defn_poiss_eqn2} by minimizing \eqref{defn_error_func}.
\If {the BFM error is too large}
\State Increase $d_{\max}$ and jump to step 5.
\Else \State Compute sensitivities of $f$ w.r.t. every parameter using the formula \eqref{approx_sens_formula}.
\State \Return
\EndIf
\EndProcedure
\end{algorithmic}
\end{algorithm}

\subsection{Software Implementation}
\label{subsec:code}

We have computationally implemented our method (PE) as a C++ library, called \emph{cossmosLib}, to which one can specify an arbitrary reaction network (as a script in \emph{Systems Biology Markup Language} (SMBL) \cite{hucka2003systems}) and output estimates of the desired steady-state parametric sensitivities. In our software implementation, the most computationally intensive tasks are \emph{parallelized} via support of the Message Passing Interface (MPI). The \emph{cossmosLib} can be downloaded from our \href{https://git.bsse.ethz.ch/patrikd/cossmosLib}{GitLab} repository \cite{cossmosLib}. This page contains full documentation of the library along with detailed installation instructions.

Our library has several dependencies on (mostly) open source software. The only non-open source software we employ, are the Intel MPI and Intel MKL (Math Kernel Library) libraries. Note that these libraries could in theory be interchanged with corresponding open source libraries (some implementations of the BLAS and LAPACK libraries as well as some implementation, like Open MPI, of the MPI library), but in our experience the Intel libraries performed the best. Furthermore we employ the ParMETIS library \cite{Karypis98} from the Karypis Lab, the SuperLU\_dist library \cite{Li03} from the Computational Research Division of the Berkeley Lab Computing Sciences, the libSBML library \cite{Bornstein08} from the SBML community and the Trilinos library \cite{Baker2009,Heroux2003,Sala07} from the Sandia National Laboratories.

\section{Examples} \label{sec:examples}

In this section we test our method for steady-state sensitivity estimation on several examples, and compare it with the simulation method IntCLR \cite{wang2018steady} for estimating the steady-state sensitivity. In the preceding sections we discussed sensitivity computation w.r.t.\ a scalar parameter $\theta$, but the extension to a vector-valued parameter is straightforward. For vector-valued $\theta = (\theta_1,\theta_2,\dots)$, we view the steady-state sensitivity value $S_\theta(f)$ as the vector
\begin{align*}
S_\theta(f)  =( S_{\theta_1}(f), S_{\theta_2}(f),\dots  ),
\end{align*}
where each $S_{\theta_i}(f)$ is defined as \eqref{defn_ssparamsens}. Suppose we obtain an estimate $\hat{S}_{\theta_i}(f)$ of the true sensitivity value $S_{\theta_i}(f)$. Then the percentage \emph{Relative Error} ({\bf RE}) is defined as
\begin{align}
\label{defn_RE_formula}
\textnormal{RE}(\theta_i) = \left\{
\begin{array}{cc}
\left| \frac{ \hat{S}_{\theta_i}(f)-  S_{\theta_i}(f)}{ S_{\theta_i}(f)}  \right| \times 100  & \quad \textnormal{if} \quad S_{\theta_i}(f) \neq 0  \\
|  \hat{S}_{\theta_i}(f) | & \quad \textnormal{otherwise}.
\end{array}\right.
\end{align}
Moreover we define the percentage \emph{Total Relative Error} ({\bf TRE}) as
\begin{align*}
\textnormal{TRE}(\theta) = \sum_{i} \textnormal{RE}(\theta_i).
\end{align*}
From now on we denote the steady-state expectation of any output function $f$ of our $\theta$-dependent stochastic reaction network as $\E_\theta(f)$. In all our examples $f$ takes the form $f(x) =x_i$, i.e. the output is simply the copy-number of species $\mathcal{S}_i$. Therefore we shall express the steady-state expectation as $\E_\theta(x_i)$ and the corresponding $\theta$-sensitivity vector as $S_\theta(x_i)  =( S_{\theta_1}(x_i), S_{\theta_2}(x_i),\dots  )$.

In our first three examples, the irreducible state-space is $\N^d_0$, where $d$ is the number of species. For our computations, we shall employ trapezoidal truncations of the state-space (see \cite{sFSP}) which are defined using two cut-off values $C_l$ and $C_r$ (with $C_l \leq C_r$) as  
\begin{align}
\label{defn_trap_trunc}
\mathcal{T}(C_l,C_r) = \{ x \in \mathbb{N}^d_0:\phi_d(\boldsymbol{0},C_l) \leq \phi_d(x) \leq \phi_d(C_r,\boldsymbol{0})\}.
\end{align}
Here we follow the convention that $ \binom{a}{b} = 0$ if $a < b$, and the function $\phi_d$ is given by  
\begin{align*}
\phi_d(x_1,\dots, x_d) = \binom{x_1+\cdots+x_d+d-1}{d} + \binom{x_1+\cdots+x_{d-1}+d-2}{1} + \cdots + \binom{x_1}{1}.
\end{align*}
In the last example we consider, the irreducible state-space is a finite set and hence no truncations are necessary for the computations.

Finally we would like to point out that all the computations were performed on the \emph{Euler} cluster of ETH Zurich\cite{Euler}. This allowed us to test the parallel implementation of our library \emph{cossmosLib}. 
 
\subsection{Gene Expression Network}
\label{subsec:genex}

Our first example is the standard gene-expression network introduced in \cite{MO} (see Figure \ref{fig::genex}(A)). It consists of two species $\mathcal{S}_1$ (mRNA) and $\mathcal{S}_2$ (Protein) which participate in four reactions whose propensities are parametrized by $\theta = (\theta_1,\theta_2,\theta_3,\theta_4)$. These reactions are described in Table \ref{genex} and we set the parameters as $\theta_1 = 90$, $\theta_2 = 4$, $\theta_3 = 0.5$ and $\theta_4 = 0.2$. 

We are interested in estimating the sensitivity-vectors $S_\theta(x_1)$ and $S_\theta(x_2)$, which correspond to parametric sensitivities of the steady-state expectation of the mRNA copy-numbers ($\E_\theta (x_1)$) and the protein copy-numbers ($\E_\theta (x_2)$) respectively. In this example all propensity functions are linear, and hence the steady-state expectations $\E_\theta (x_1)$ and $\E_\theta (x_2)$ can be explicitly computed as functions of $\theta$. Consequently the sensitivity vectors can also be exactly evaluated and used for estimating the RE.

We next estimate the sensitivity-vectors with our PE method given by Algorithm \ref{alg:pe}. We employ our method with five trapezoidal state-space truncations mentioned in Table \ref{table_ge_state_spaces}. This table also mentions the designated state chosen for sFSP. For estimating the solution of the finite-dimensional Poisson equation \eqref{defn_poiss_eqn2} via BFM we use $65$ basis vectors, which corresponds to $d_{ \textnormal{max} } =10$ (see Section \ref{subsec:Poisson}). The sensitivity-vectors estimated by PE, with the largest truncated  state-space, are reported in Table \ref{tbl:genexSens} and depicted as a bar chart in Figure \ref{fig::genex}(B) along with their percentage \emph{Relative Error} (RE). Note that the RE$\%$ is always below $3 \cdot 10^{-7}$ which demonstrates the accuracy of PE method. In Figure \ref{fig::genex}(C) we illustrate how the \emph{Total Relative Error} (TRE) decreases, while the memory requirements get larger as the truncated state-space expands in size. Even though the memory requirements increase linearly with the state-space size, the decrease in the TRE$\%$ is almost exponential, which is consistent with the results reported in \cite{sFSP}. In Figure \ref{fig::genex}(D) we highlight potential drawbacks of the simulation-based approach IntCLR in estimating steady-state sensitivity $S_{\theta_1}(x_2)$. Here the total central processing time is fixed to be $120$ hours, and the final time $T$ denotes the time at which steady-state is assumed to be reached and at which the sensitivity is computed. One can see that as $T$ increases, the estimation accuracy improves (i.e. RE decreases) but the number of simulation samples that are generated within 120 hours gets lower and hence the standard deviation of the estimator increases. Such a trade-off is natural to expect for any simulation-based method for estimating steady-state sensitivity. Note that even though our PE method has high memory requirements, even with the largest truncated state-space we consider, PE yields all the estimates within just $19$ minutes (with four processors on ETH Zurich's Euler cluster) which is much smaller than the processing time for IntCLR (120 hours). Moreover even the most accurate estimate obtained by IntCLR has a RE$\%$ 10,000 times higher than the RE$\%$ for the corresponding PE-estimated value.

\begin{figure}[h]
\includegraphics[width=\textwidth]{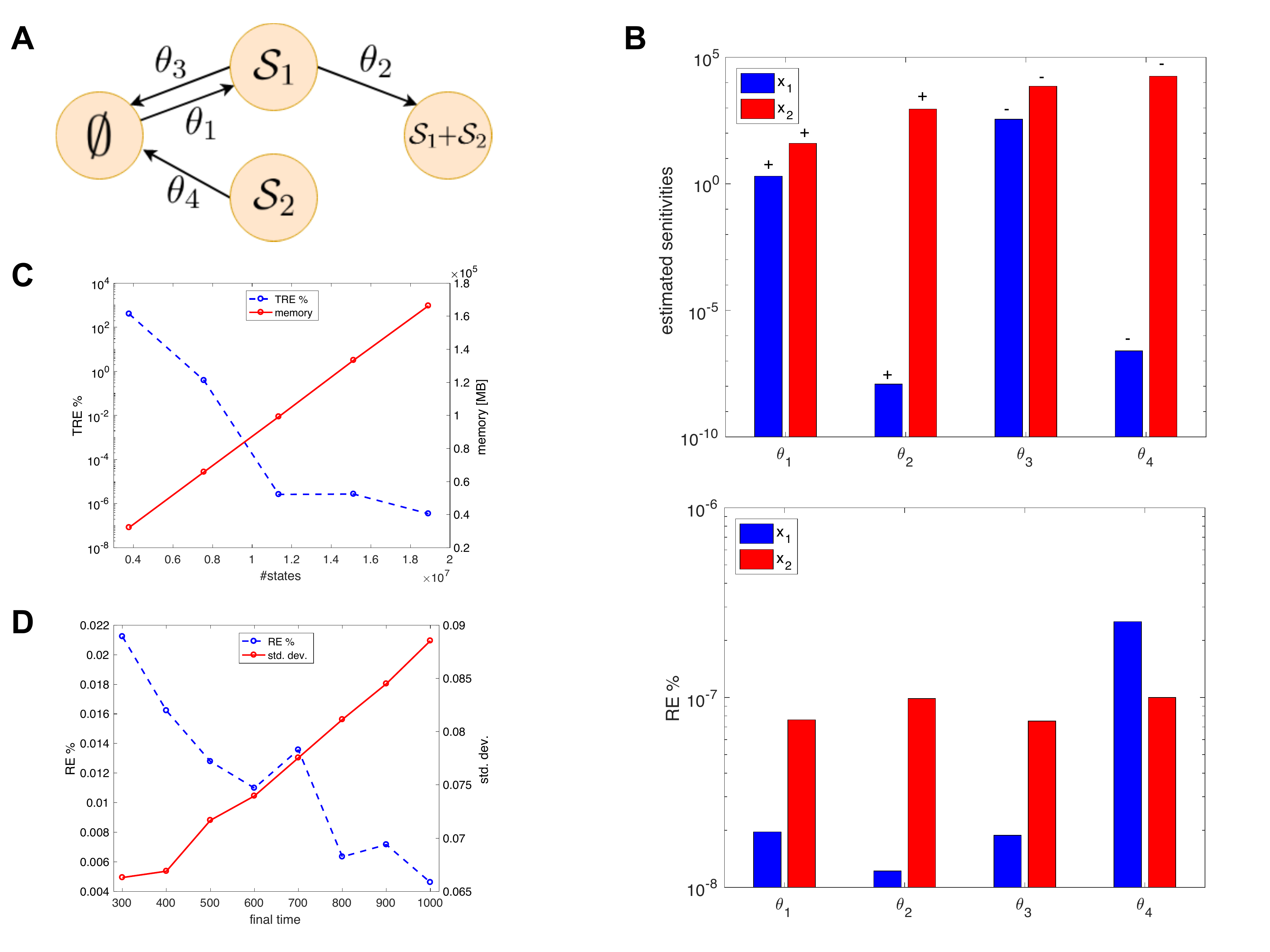}
\caption{ \scriptsize 
Sensitivity analysis of the gene-expression network depicted in (A). Species $\mathcal{S}_1$ and $\mathcal{S}_2$ represent mRNA and Protein respectively. All reactions have mass-action propensities and the vector of rate constants is $\theta = (\theta_1,\theta_2,\theta_3,\theta_4) = (90,4,0.5,0.2)$. (B) Bar graphs showing the PE-estimated sensitivity values (top) along with the corresponding RE$\%$ (bottom). Sensitivities are computed w.r.t. all parameters for two objective functions ($x_1$ and $x_2$). Note that in the bar graph only the absolute value of the sensitivity value is plotted while its sign is displayed on the top of the bar. (C) Plots the TRE$\%$ for the sensitivity vector $S_\theta(x_2) = ( S_{\theta_1}(x_2), S_{\theta_2}(x_2), S_{\theta_3}(x_2),S_{\theta_4}(x_2))$ for different state-space truncations listed in Table \ref{table_ge_state_spaces}. The total memory used (by all the processors) is also plotted. (D) Plots the percentage relative error (RE$\%$) for the sensitivity $\mathcal{S}_{\theta_1}(x_2)$ estimated with the simulation-based estimation method IntCLR \cite{wang2018steady} implemented on a single processor with total processing time of $120$ hours. Each estimate was produced by simulating the stochastic trajectories in the time-period $[0,T]$, where $T$ is the final time at which the steady-state is assumed to be reached. The statistical accuracy of the estimate (measured as standard deviation of the estimator) is also shown. } 
\label{fig::genex}
\end{figure}

\subsection{Toggle-Switch Network}
\label{subsec:toggle}

Our next example is the synthetic toggle-switch proposed by Gardner et al.\cite{Gardner}. Here two species $\mathcal{S}_1$ and $\mathcal{S}_2$ are simply \emph{repressing} the production of each other. This network consists of four reactions whose propensities are parametrized by $\theta = (\theta_1,\dots,\theta_6)$ (see Table \ref{ToggleSwitch}). We set the parameters as $\theta_1 = 500$, $\theta_2 = 3$, $\theta_3 = 200$, $\theta_4 = 0.4$, $\theta_5 = 1.5$ and $\theta_6 = 1$. As in the previous example, we are interested in estimating the sensitivity-vectors $S_\theta(x_1)$ and $S_\theta(x_2)$, which correspond to parametric sensitivities of the steady-state expectation of the copy-numbers of species $\mathcal{S}_1$ and $\mathcal{S}_2$ respectively. It is well-known that the stationary distribution for the CTMC model of the toggle-switch is \emph{bimodal}, with each mode corresponding to one of the species being dominant \cite{sFSP}. This bimodality causes problems in estimating the stationary distribution and it arises due to nonlinearities in the propensity functions. Moreover these nonlinearities prevent us from determining the steady-state sensitivities exactly. Hence we cannot compute the RE$\%$ for this example.

We now estimate the sensitivity-vectors with our PE method (Algorithm \ref{alg:pe}). We employ six trapezoidal state-space truncations (see Table \ref{table_tgs_state_spaces}), with the designated state set as $x_\ell = (235,115)$. We choose $d_{ \textnormal{max} } =10$ to obtain $65$ basis vectors for the BFM. The sensitivity-vectors estimated by PE, with the largest truncated  state-space, are reported in Table \ref{tbl:TSSens} and depicted as a bar chart in Figure \ref{fig::tgs}(B). To obtain these PE-estimated sensitivity values, overall 13 minutes were required with a single processor on ETH Zurich's Euler cluster. Figure \ref{fig::tgs}(C) shows that as expected, the increase in memory requirements is linear in the truncated state-space size. Next we estimated a single steady-state sensitivity $\mathcal{S}_{\theta_5}(x_2)$ using IntCLR with total processing time of $120$ hours. As the true sensitivity value is unknown, we estimate the accuracy of the simulation-based estimate by computing eRE$\%$ obtained by substituting the true sensitivity value with the PE-estimated value in \eqref{defn_RE_formula}. Like in the previous example, as the simulation time-period increases, the estimate becomes more accurate (i.e.\ eRE$\%$ decreases) but the statistical accuracy of the estimate (measured as standard deviation of the estimator) deteriorates as the number of simulation samples obtained in the limited processing time gets smaller (see Figure \ref{fig::tgs}(D)).

\begin{figure}[h]
	\includegraphics[width=0.8\textwidth]{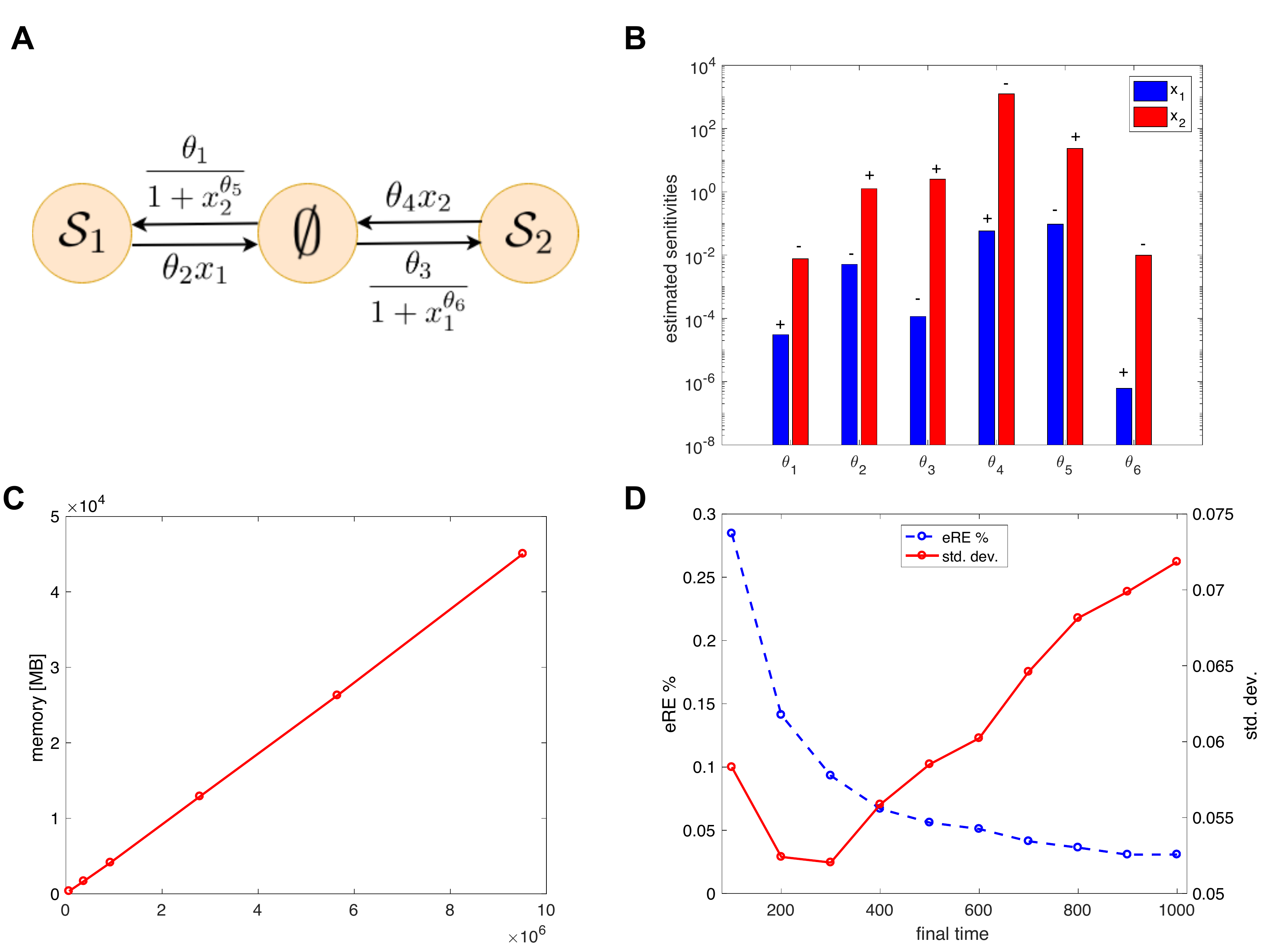}
	\caption{
	\scriptsize 
Sensitivity analysis of the toggle-switch network depicted in (A). Species $\mathcal{S}_1$ and $\mathcal{S}_2$ mutually repress each others production which is captured by a Hill function. The propensity functions are parametrized by 
$\theta = (\theta_1,\dots,\theta_6) = (500,3,200,0.4,1.5,1)$. (B) Bar graphs showing the PE-estimated sensitivity values computed w.r.t. all parameters for two objective functions ($x_1$ and $x_2$). The sign above the bars indicates whether the respective estimate is positive or negative. (C) Plots the memory requirements for different state-space truncations listed in Table \ref{table_tgs_state_spaces}. D) Plots the estimated percentage relative error (eRE$\%$) for the sensitivity $\mathcal{S}_{\theta_5}(x_2)$ estimated with the simulation-based estimation method IntCLR \cite{wang2018steady} implemented on a single processor with total processing time of $120$ hours. Each estimate was produced by simulating the stochastic trajectories in the time-period $[0,T]$, where $T$ is the final time at which the steady-state is assumed to be reached. The statistical accuracy of the estimate (measured as standard deviation of the estimator) is also shown.}
	\label{fig::tgs}
\end{figure}

\subsection{A Deficiency Zero Network} 
\label{subsec:DefZero}

We now consider a larger network where the stationary distribution is explicitly known and hence the steady-state sensitivities can be computed explicitly. Consider the four-species network depicted in Figure \ref{fig::dz}(A) whose reactions are described in Table \ref{DefZero}. The parameter vector for this network is $\theta = (\theta_1,\dots,\theta_6) = (4.5, 0.8,5,1,0.6,11,3,80)$. In this example we are interested in estimating the sensitivity vectors for all the species (i.e.\ $S_{\theta}(x_i)$ for $i=1,2,3,4$). Note that each reaction $k$ of the form \eqref{reactionform}, can be represented as $\nu_k \longrightarrow \nu'_k$, where $\nu_k = (\nu_{1k},\nu_{2k},\dots)$ and $\nu'_k = (\nu'_{1k},\nu'_{2k},\dots)$ are non-negative integer vectors representing the \emph{source} complex and the \emph{product} complex, respectively. The network under consideration is \emph{weakly-reversible} (i.e. for each reaction $\nu_k \longrightarrow \nu'_k$ there is a sequence of reactions realizing $\nu'_k \longrightarrow \nu_k$) and it has \emph{deficiency zero} \footnote{The deficiency of a reaction network is defined as $\delta = C - l -s$, where $C$ is the number of complexes, $l$ is the number of connected components in the complex reaction graph (see Figure \ref{fig::dz}(A) for example) and $s$ is the dimension of the subspace spanned by the stoichiometry vectors $\zeta_k$-s.}. Moreover all the propensities have mass-action form \eqref{massactionkin_defn}. From Theorem 4.2 in \cite{anderson2010product} it can be seen that the stationary distribution for this network has the product-form Poisson distribution
\begin{align*}
\pi_\theta ( x_1,x_2,x_3,x_4 ) = \exp\left( -\sum_{i=1}^4 c_i(\theta) \right) \prod_{i=1}^4 \frac{ (c_i(\theta) )^{x_i}}{x_i!},
\end{align*}
where $c(\theta) = (c_1(\theta), c_2(\theta) , c_3(\theta) , c_4(\theta)  )$ is the mean vector given by
\begin{align*}
c_1(\theta) = \frac{\theta_1}{ \theta_2 }, \quad c_2(\theta) = \frac{\theta_3}{ \theta_4}, \quad c_3(\theta) = \frac{\theta_1 \theta_3 \theta_7}{ \theta_2 \theta_4 \theta_8} \quad \textnormal{and} \quad c_4(\theta) = \frac{\theta^2_3 \theta_5}{ \theta^2_4 \theta_6}.
\end{align*}
Using this explicit form of the stationary distribution we can exactly compute the sensitivity vectors at our chosen value of parameter-vector $\theta$. These exact values allow us to compute RE$\%$ and assess the accuracy of our sensitivity estimates.

\begin{figure}[h]
	\includegraphics[width=0.8\textwidth]{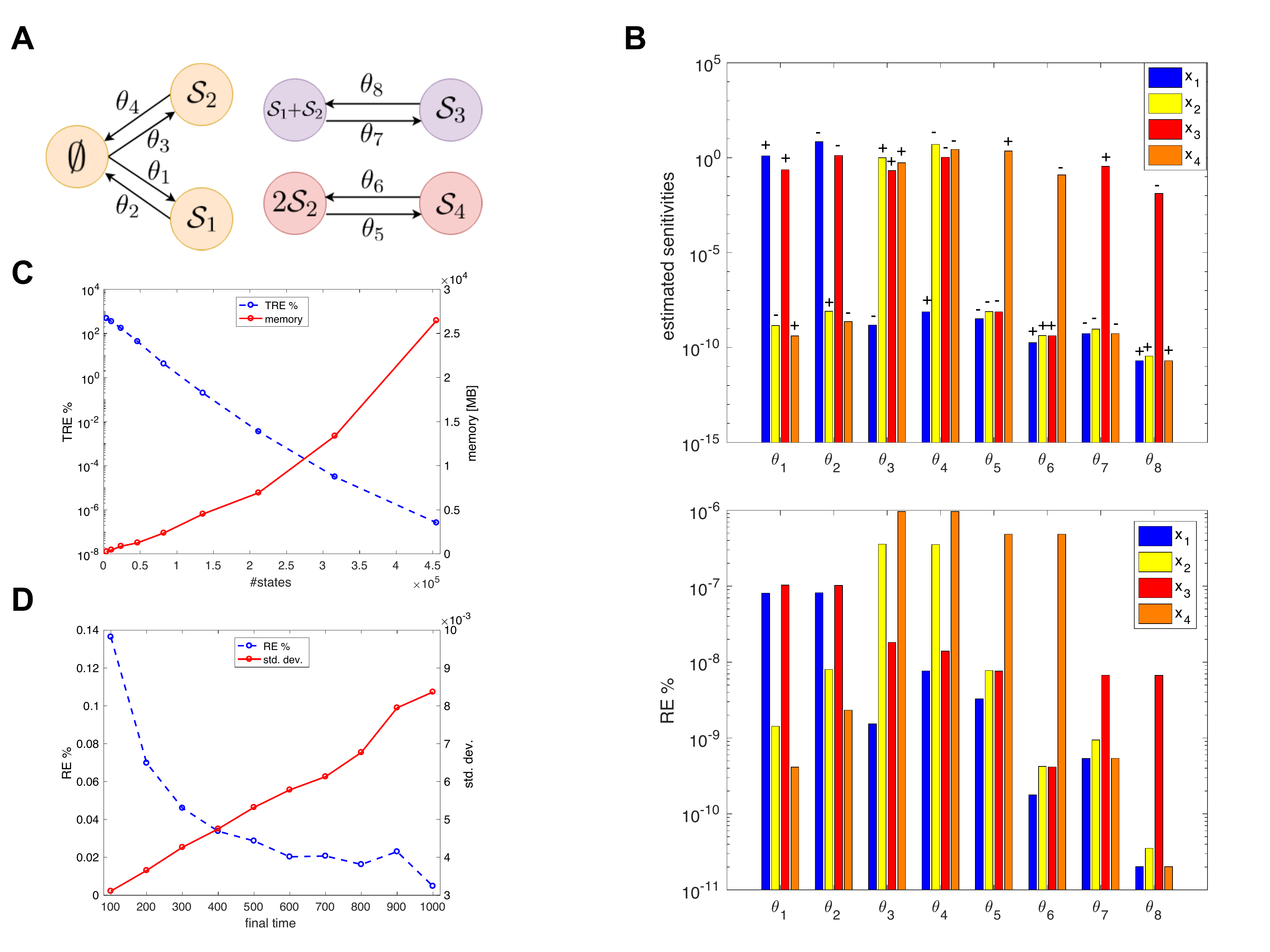}
	\caption{ \scriptsize 
Sensitivity analysis of the deficiency zero network depicted in (A). All reactions have mass-action propensities and the vector of rate constants is $\theta = (\theta_1,\cdots,\theta_8) = (4.5,0.8,5,1,0.6,11,3,80)$. (B) Bar graphs showing the PE-estimated sensitivity values (top) along with the corresponding RE$\%$ (bottom). Sensitivities are computed w.r.t. all parameters for two objective functions ($x_1$ and $x_2$). Note that in the bar graph only the absolute value of the sensitivity value is plotted while its sign is displayed on the top of the bar. (C) Plots the TRE$\%$ for the sensitivity vector $S_\theta(x_3) = ( S_{\theta_1}(x_2),
\dots,S_{\theta_8}(x_2))$ for different state-space truncations listed in Table \ref{table_dz_state_spaces}. The total memory used is also plotted. (D) Plots the percentage relative error (RE$\%$) for the sensitivity $\mathcal{S}_{\theta_5}(x_4)$ estimated with the simulation-based estimation method IntCLR \cite{wang2018steady} implemented on a single processor with total processing time of $120$ hours. Each estimate was produced by simulating the stochastic trajectories in the time-period $[0,T]$, where $T$ is the final time at which the steady-state is assumed to be reached. The statistical accuracy of the estimate (measured as standard deviation of the estimator) is also shown. } 
	\label{fig::dz}
\end{figure}

We now employ our PE method (Algorithm \ref{alg:pe}) to estimate the sensitivity-vectors. We use nine trapezoidal state-space truncations (see Table \ref{table_dz_state_spaces}) and set the designated state as $x_\ell = (10,0,0,0)$. We choose $d_{ \textnormal{max} } =10$ to obtain $1000$ basis vectors for the BFM. The sensitivity-vectors estimated by PE, with the largest truncated state-space, are reported in Table \ref{tbl:defZeroSens} and depicted as a bar chart in Figure \ref{fig::dz}(B). Note that for all these sensitivity estimates the RE$\%$ is less than $10^{-6}$ certifying the accuracy of our PE method. All these estimations took 73 minutes with a single processor on ETH Zurich's Euler cluster. Like in the first example, as the size of the truncated state-spaces increases, the overall error (TRE$\%$) decreases exponentially but the memory requirements increase linearly (see Figure \ref{fig::dz}(C)). Next a single steady-state sensitivity $S_{\theta_5}(x_4)$ was estimated using IntCLR with total processing time of $120$ hours. As before, increasing the simulation time-period reduces the error but worsens the statistical accuracy of the estimator (see Figure \ref{fig::dz}(D)). Remarkably, even the most accurate estimate has a RE$\%$ which is 1'000 times higher than the RE$\%$ for the corresponding PE-estimated value.

\subsection{A Simple Linear Network} 
\label{subsec:paper}

As our last example, we consider the simple linear network given in \cite{wang2018steady} involving three species and four conversion reactions (see Figure \ref{fig::sln}(A) and Table \ref{PaperSystem}). The vector of reaction-rate constants is $\theta = (\theta_1,\theta_2,\theta_3,\theta_4)$ and we set the parameters as $\theta_1 = 10,\theta_2 = 20, \theta_3 = 0.03$ and $\theta_4 = 0.02$. We are interested in estimating all sensitivities of the form $S_{\theta_i}(x_j)$ for $i=1,\dots,4$ and $j=1,\dots,3$. Exploiting the linearity of the propensity function, we can analytically compute the steady-state expectations and the exact sensitivity values. These values are used for estimating the RE for our sensitivity estimates.

Notice that in this reaction network there is no inflow or outflow of mass and the total number of molecules of all the species is conserved, i.e. the CTMC dynamics for this reaction network evolves on the finite state-space
\begin{align*}
\mathcal{E} = \{ x = (x_1,x_2,x_3) \in \N^3_0 :  x_1 + x_2 + x_3 = c\}
\end{align*}
where the constant $c$ is determined by the initial condition. We choose $c = 10$ and for this value the state-space $\mathcal{E}$ is reasonably small and sFSP can yield the exact stationary distribution $\pi_\theta$ without requiring any state-space truncation. Hence when we estimate the sensitivity values using our PE method (Algorithm \ref{alg:pe}) the main source of error is the error incurred by BFM in solving the Poisson equation. It can be seen from Figure \ref{fig::sln}(B) and Table \ref{tbl:PaperSystemSens}, that this error is very small (RE$\%  \leq 2 \cdot 10^{-5}$ for all estimates) and the PE-based estimation is highly accurate. Here only two basis vectors were used for BFM which corresponds to $d_{ \textnormal{max} } = 1$. We then estimate one steady-state sensitivity value $\mathcal{S}_{\theta_3}(x_1)$ with the simulation-based approach IntCLR with the total central processing time fixed to $120$ hours. The results are shown in Figure \ref{fig::sln}(C) for different values of the final time $T$ where the steady-state is assumed to be reached. As in the previous examples, when $T$ increases the accuracy of the IntCLR estimate improves (i.e. RE$\%$ decreases), but the estimator standard deviation also increases due to lower number of simulation samples being generated in the limited processing time of 120 hours. However for this simple example, PE method yields all the sensitivity estimates within just 0.2 seconds (with a single processor on ETH Zurich's Euler cluster) and the accuracy of the PE estimate (measured by RE$\%$) was 900 times better than the best estimate provided by IntCLR.

\begin{figure}[h]
	\includegraphics[width=0.8\textwidth]{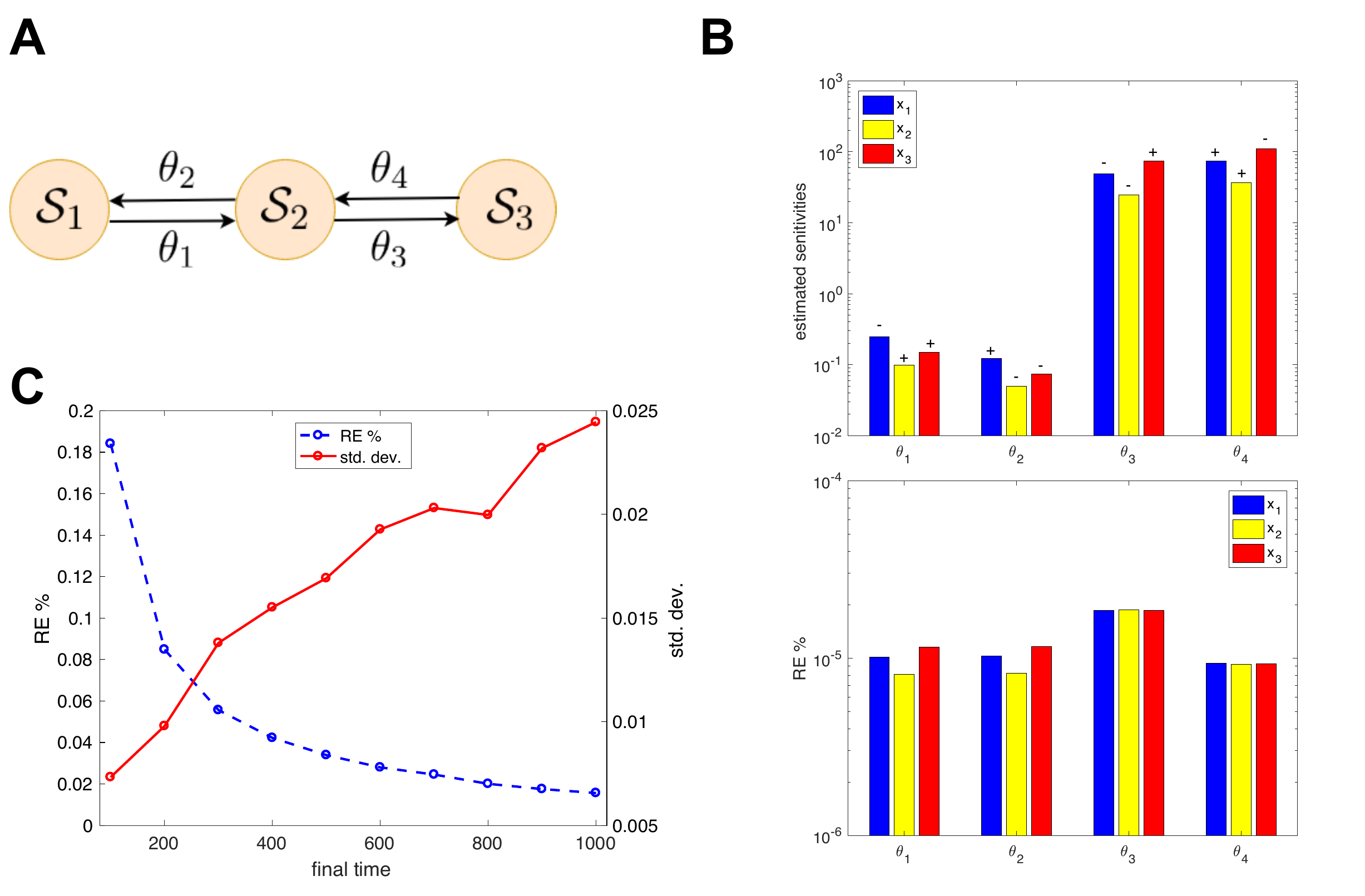}
\caption{ \scriptsize 
Sensitivity analysis of the simple linear network depicted in (A). All reactions have mass-action propensities and the vector of rate constants is $\theta = (\theta_1,\theta_2,\theta_3,\theta_4) = (10,20,0.03,0.02)$. (B) Bar graphs showing the PE-estimated sensitivity values (top) along with the corresponding RE$\%$ (bottom). Sensitivities are computed w.r.t. all parameters for three objective functions ($x_1$, $x_2$ and $x_3$). Note that in the bar graph only the absolute value of the sensitivity value is plotted while its sign is displayed on the top of the bar. (C) Plots the percentage relative error (RE$\%$) for the sensitivity $\mathcal{S}_{\theta_3}(x_1)$ estimated with the simulation-based estimation method IntCLR \cite{wang2018steady} implemented on a single processor with total processing time of $120$ hours. Each estimate was produced by simulating the stochastic trajectories in the time-period $[0,T]$, where $T$ is the final time at which the steady-state is assumed to be reached. The statistical accuracy of the estimate (measured as standard deviation of the estimator) is also shown.} 
\label{fig::sln}
\end{figure}

\section{Conclusion} \label{sec:conclusion}

The aim of this paper is to develop a numerical method for the estimation of the steady-state parameter sensitivity for stochastic reaction network models where the dynamics is described by a continuous-time Markov chain (CTMC) \cite{DASurvey}. Unlike other sensitivity estimation approaches \cite{IRN,Gir,KSR1,KSR2,DA,Our,Gupta2,wang2018steady}, our method does not require time-consuming simulations of the stochastic trajectories, but instead it relies on approximately solving a Poisson equation associated with the generator of the CTMC as well as on estimation of the stationary distribution obtained with the recently proposed \emph{stationary Finite State Projection} (sFSP) method \cite{sFSP}. We call our method the \emph{Poisson Estimator} (PE) and we mathematically prove the accuracy of this method under certain conditions. Moreover using many examples we demonstrate that it compares favorably to simulation-based methods, like IntCLR \cite{wang2018steady}, that exist for steady-state sensitivity estimation. In particular we found that estimation with PE requires far less computational time than with IntCLR and the estimates provided by PE are much more accurate than the corresponding IntCLR estimates. However PE has much higher memory requirements than IntCLR, and to circumvent this problem parallel computations may be necessary for even moderately sized networks.

In order to help the usability of our method we have developed a C++ library, called \emph{cossmosLib} \cite{cossmosLib}, that provides parallel implementation of the PE method for efficient estimation of both the stationary distribution (via sFSP) and the steady-state parameter sensitivities. Currently the parallelization in the \emph{cossmosLib} library is achieved by incorporating the \emph{Message Passing Interface} (MPI) and existing sparse matrix factorization methods that are part of the \emph{Trilinos} \cite{Baker2009,Heroux2003,Sala07} package. Further improvements in the performance of \emph{cossmosLib} can be obtained by customizing the parallel computations in such a way that the structure and the transition kernel of the CTMC are properly exploited in finding the stationary distribution and solving the Poisson equation.

\setcounter{equation}{0}

\renewcommand {\thetable}{A.\arabic{table}}
\section*{Appendix: Supplementary Tables}

This section contains the tables describing the reactions networks, sensitivity analysis and truncated state-spaces for all the examples.

      \begin{table}[h]
    \begin{center}
        \begin{tabular}{|c|c|c|c|}
            \hline
            No. & Reaction & Propensity \\
            \hline
            1 & $\emptyset \rightarrow \mathcal{S}_1$ & $\lambda_1(x,\theta) = \theta_1$ \\
            2 & $\mathcal{S}_1 \rightarrow \mathcal{S}_1+\mathcal{S}_2$ & $\lambda_2(x,\theta) = \theta_2 x_1$ \\
            3 & $\mathcal{S}_1 \rightarrow \emptyset$ & $\lambda_3(x,\theta) = \theta_3 x_1$ \\
            4 & $\mathcal{S}_2 \rightarrow \emptyset$ & $\lambda_4(x,\theta) = \theta_4 x_2$ \\
            \hline
        \end{tabular}
        \caption{Reactions in the gene-expression network. The propensities are parametrized by $\theta = (\theta_1,\theta_2,\theta_3,\theta_4)$.}
        \label{genex}
    \end{center}
\end{table}
\begin{table}[h]
\begin{center}
\begin{tabular}{|c|c|c|c|c|c|c|}
\hline
No. & \multicolumn{2}{|c|}{Cut-offs}  & State-space size & Designated state  \\
\hline
$i$ & $C_{l,i}$ & $C_{r,i}$ & $n_i$ & $x_\ell$  \\
\hline
\hline
$1$ & $3280$ & $4280$ & $3,784,784$ & $(10, 3270)$ \\
$2$ & $2780$ & $4780$ & $7,565,784$  & $(10, 2770)$ \\
$3$ & $2280$ & $5280$ & $11,346,784$  &  $(10, 2270)$ \\
$4$ & $1780$ & $5780$ & $15,127,784$ & $(10, 1770)$ \\
$5$ & $1280$ & $6280$ & $18,908,784$ & $(10, 1270)$ \\
\hline
\end{tabular}
\end{center}
\caption{Truncated state-spaces and designated states used in the gene-expression network. The cut-off values ($C_l$ and $C_r$) for the trapezoidal truncations \eqref{defn_trap_trunc} are provided.}
\label{table_ge_state_spaces}
\end{table} 
\begin{table}[h!]
    \begin{center} {\small
        \begin{tabular}{|c | c | c | c | c | c | c |}
            \hline
            Objective & Sensitive & Sensitivity & \multirow{2}{*}{RE\%} & Sensitive & Sensitivity & \multirow{2}{*}{RE\%} \\
            Function & Parameter & Values (PE) & & Parameter & Values (PE) & \\ \hline \hline
            \multirow{2}{*}{$x_1$} & $\theta_1$ & $2.0000$
            & $1.9609\cdot 10^{-8}$ & $\theta_3$ & $-360.0000$ & $1.8802\cdot 10^{-8}$\\
            & $\theta_2$ & $1.2186\cdot 10^{-8}$ & $1.2186\cdot 10^{-8}$ & $\theta_4$ & $-2.5096\cdot 10^{-7}$ & $2.5096\cdot 10^{-7}$\\ \hline \hline
            \multirow{2}{*}{$x_2$} & $\theta_1$ & $40.000$ & $7.6378\cdot 10^{-8}$ & $\theta_3$ & $-7200.0$ & $7.5121\cdot 10^{-8}$\\
            & $\theta_2$ & $900.00$ & $9.8708\cdot 10^{-8}$ & $\theta_4$ & $-18000$ & $9.9957\cdot 10^{-10}$\\ \hline \hline
        \end{tabular} }
    \end{center}
    \caption{Sensitivity analysis for the gene-expression network. This table lists the PE-estimated sensitivity values w.r.t. all four parameters $\theta_1,\dots,\theta_4$ for two objective functions $f(x) =x_1$ and $f(x) = x_2$ with expected steady-state values $\E_{\theta}(x_1) = 180$ and $\E_{\theta}(x_2)  = 3600$ respectively. The corresponding percentage relation error RE$\%$ \eqref{defn_RE_formula} for the sensitivity estimates is also provided. To perform all these computations the total processing time was $1142$ seconds on four processors and the total memory requirement was 166'278 MB.}
    \label{tbl:genexSens}
\end{table}

\begin{table}
    \begin{center}
        \begin{tabular}{|c|c|c|c|}
            \hline
            No. & Reaction & Propensity \\
            \hline
            1 & $\emptyset \rightarrow \mathcal{S}_1$ & $\lambda_1(x,\theta) = \frac{\theta_1}{1+x_2^{\theta_5}}$ \\
            2 & $\mathcal{S}_1 \rightarrow \emptyset$ & $\lambda_2(x,\theta) = \theta_2 x_1$ \\
            3 & $\emptyset \rightarrow \mathcal{S}_2$ & $\lambda_3(x,\theta) = \frac{\theta_3}{1+x_1^{\theta_6}}$ \\
            4 & $\mathcal{S}_2 \rightarrow \emptyset$ & $\lambda_4(x,\theta) = \theta_4 x_2$ \\
            \hline
        \end{tabular}
        \caption{Reactions in the Toggle-Switch network. The propensities are parametrized by $\theta = (\theta_1,\theta_2,\theta_3,\theta_4,\theta_5,\theta_6)$.}
        \label{ToggleSwitch}
    \end{center}
\end{table}
\begin{table}[h]
\begin{center}
\begin{tabular}{|c|c|c|c|c|c|c|}
\hline
No. & \multicolumn{2}{|c|}{Cut-offs}  & State-space size & Designated state  \\
\hline
$i$ & $C_{l,i}$ & $C_{r,i}$ & $n_i$ & $x_\ell$  \\
\hline
\hline
$1$ & $0$ & $360$ & $65341$ & $(235, 115)$ \\
$2$ & $0$ & $860$ & $371,091$  & $(235, 115)$ \\
$3$ & $0$ & $1360$ & $926,841$  &  $(235, 115)$ \\
$4$ & $0$ & $2360$ & $2,788,341$ & $(235, 115)$ \\
$5$ & $0$ & $3360$ & $5,649,841$ & $(235, 115)$ \\
$6$ & $0$ & $4360$ & $9,511,341$ & $(235, 115)$ \\
\hline
\end{tabular}
\end{center}
\caption{Truncated state-spaces and designated states used in the toggle-switch network. The cut-off values ($C_l$ and $C_r$) for the trapezoidal truncations \eqref{defn_trap_trunc} are provided.}
\label{table_tgs_state_spaces}
\end{table} 
\begin{table}[h!]
	\begin{center} {\small
		\begin{tabular}{|c | c | c | c | c |}
			\hline
			Objective & Sensitive & Sensitivity & Sensitive & Sensitivity \\
			Function & Parameter & Values (PE) & Parameter & Values (PE) \\ \hline \hline
			\multirow{3}{*}{$x_1$} & $\theta_1$ & $3.0677 \cdot 10^{-5}$ & $\theta_4$ & $0.057690$ \\
			& $\theta_2$ & $-5.1177 \cdot 10^{-3}$ & $\theta_5$ & $-0.095117$ \\
			& $\theta_3$ & $-1.1531 \cdot 10^{-4}$ & $\theta_6$ & $6.1345 \cdot 10^{-7}$ \\ \hline \hline
			\multirow{3}{*}{$x_2$} & $\theta_1$ & $-7.5857 \cdot 10^{-3}$ & $\theta_4$ & $-1254.8$ \\
			& $\theta_2$ & $1.2645$ & $\theta_5$ & $23.522$ \\
			& $\theta_3$ & $2.5095$ & $\theta_6$ & $-9.8566 \cdot 10^{-3}$ \\ \hline \hline
		\end{tabular} }
	\end{center}
	\caption{Sensitivity analysis for the toggle-switch network. This table lists the PE-estimated sensitivity values w.r.t. all six parameters $\theta_1,\dots,\theta_6$ for two objective functions $f(x) =x_1$ and $f(x) = x_2$ with expected steady-state values $\E_{\theta}(x_1) \approx 0.015148$ and $\E_{\theta}(x_2)  =496.23$ respectively. The corresponding percentage relation error RE$\%$ \eqref{defn_RE_formula} for the sensitivity estimates is also provided. To perform all these computations the total processing time was $782$ seconds on a single processor and the total memory requirement was 45'050 MB.}
	\label{tbl:TSSens}
\end{table}

\begin{table}
    \begin{center}
        \begin{tabular}{|c|c|c|c|}
            \hline
            No. & Reaction & Propensity \\
            \hline
            1 & $\emptyset \rightarrow \mathcal{S}_1$ & $\lambda_1(x,\theta) = \theta_1$ \\
            2 & $\mathcal{S}_1 \rightarrow \emptyset$ & $\lambda_2(x,\theta) = \theta_2 x_1$ \\
            3 & $\emptyset \rightarrow \mathcal{S}_2$ & $\lambda_3(x,\theta) = \theta_3$ \\
            4 & $\mathcal{S}_2 \rightarrow \emptyset$ & $\lambda_4(x,\theta) = \theta_4 x_2$ \\
            5 & $2\mathcal{S}_2 \rightarrow \mathcal{S}_4$ & $\lambda_5(x,\theta) = \theta_5 x_2 (x_2-1)$ \\
            6 & $\mathcal{S}_4 \rightarrow 2\mathcal{S}_2$ & $\lambda_6(x,\theta) = \theta_6 x_4$ \\
            7 & $\mathcal{S}_1+\mathcal{S}_2 \rightarrow \mathcal{S}_3$ & $\lambda_7(x,\theta) = \theta_7 x_1 x_2$ \\
            8 & $\mathcal{S}_3 \rightarrow \mathcal{S}_1+\mathcal{S}_2$ & $\lambda_8(x,\theta) = \theta_8 x_3$ \\
            \hline
        \end{tabular}
        \caption{Reactions in the deficiency zero network. The propensities are parametrized by the vector $\theta = (\theta_1,\theta_2,\theta_3,\theta_4,\theta_5,\theta_6)$. }
        \label{DefZero}
    \end{center}
\end{table}

\begin{table}[h]
\begin{center}
\begin{tabular}{|c|c|c|c|c|c|c|}
\hline
No. & \multicolumn{2}{|c|}{Cut-offs}  & State-space size & Designated state  \\
\hline
$i$ & $C_{l,i}$ & $C_{r,i}$ & $n_i$ & $x_\ell$  \\
\hline
\hline
$1$ & $0$ & $15$ & $3,876$ & $(10,0,0,0)$ \\
$2$ & $0$ & $20$ & $10,626$  & $(10,0,0,0)$ \\
$3$ & $0$ & $25$ & $23,751$  &  $(10,0,0,0)$ \\
$4$ & $0$ & $30$ & $46,376$ & $(10,0,0,0)$ \\
$5$ & $0$ & $35$ & $82,251$ & $(10,0,0,0)$ \\
$6$ & $0$ & $40$ & $135,751$ & $(10,0,0,0)$ \\
$7$ & $0$ & $45$ & $211,876$ & $(10,0,0,0)$ \\
$8$ & $0$ & $50$ & $316,251$ & $(10,0,0,0)$ \\
$9$ & $0$ & $55$ & $455,126$ & $(10,0,0,0)$ \\
\hline
\end{tabular}
\end{center}
\caption{Truncated state-spaces and designated states used in the deficiency zero network. The cut-off values ($C_l$ and $C_r$) for the trapezoidal truncations \eqref{defn_trap_trunc} are provided.}
\label{table_dz_state_spaces}
\end{table} 

\begin{table}[h!]
    \begin{center} {\small
        \begin{tabular}{|c | c | c | c | c | c | c |}
            \hline
            Objective & Sensitive & Sensitivity & \multirow{2}{*}{RE\%} & Sensitive & Sensitivity & \multirow{2}{*}{RE\%} \\
            Function & Parameter & Values (PE) & & Parameter & Values (PE) & \\ \hline \hline
            \multirow{4}{*}{$x_1$} & $\theta_1$ & $1.2450$ & $8.0956 \cdot 10^{-8}$ & $\theta_5$ & $-3.2680 \cdot 10^{-9}$ &  $3.2680 \cdot 10^{-9}$ \\
            & $\theta_2$ & $-7.0312$ & $8.2053 \cdot 10^{-8}$ & $\theta_6$ & $1.7825 \cdot 10^{-10}$ & $1.7825 \cdot 10^{-10}$ \\
            & $\theta_3$ & $-1.5332  \cdot 10^{-9}$ & $1.5332  \cdot 10^{-9}$ & $\theta_7$ & $-5.3773 \cdot 10^{-10}$ & $5.3773 \cdot 10^{-10}$ \\
            & $\theta_4$ & $7.6662 \cdot 10^{-9}$ & $7.6662 \cdot 10^{-9}$ & $\theta_8$ & $2.0159 \cdot 10^{-11}$ & $2.0159 \cdot 10^{-11}$ \\ \hline \hline
            \multirow{4}{*}{$x_2$} & $\theta_1$ & $-1.4302 \cdot 10^{-9}$ & $1.4302 \cdot 10^{-9}$ & $\theta_5$ & $-7.7320 \cdot 10^{-9}$ & $7.7320 \cdot 10^{-9}$ \\
            & $\theta_2$ & $8.0446 \cdot 10^{-9}$ & $8.0446 \cdot 10^{-9}$ & $\theta_6$ & $4.2174 \cdot 10^{-10}$ & $4.2174 \cdot 10^{-10}$\\
            & $\theta_3$ & $1.0000$ & $3.5816 \cdot 10^{-7}$ & $\theta_7$ & $-9.3908 \cdot 10^{-10}$ & $9.3908 \cdot 10^{-10}$ \\
            & $\theta_4$ & $-5.0000$ &  $3.5429 \cdot 10^{-7}$ & $\theta_8$ & $3.5205 \cdot 10^{-11}$ & $3.5205 \cdot 10^{-10}$ \\ \hline \hline
            \multirow{4}{*}{$x_3$} & $\theta_1$ & $0.23438$ & $1.0319 \cdot 10^{-7}$ & $\theta_5$ & $-7.6189 \cdot 10^{-9}$ & $7.6189 \cdot 10^{-9}$ \\
            & $\theta_2$ & $-1.3184$ & $1.0260 \cdot 10^{-7}$ & $\theta_6$ & $4.1558 \cdot 10^{-10}$ & $4.1558 \cdot 10^{-10}$ \\
            & $\theta_3$ & $0.21094$ & $1.8204 \cdot 10^{-8}$ & $\theta_7$ & $0.35156$ & $6.7294 \cdot 10^{-9}$  \\
            & $\theta_4$ & $-1.0547$ & $1.3920 \cdot 10^{-8}$ & $\theta_8$ & $-0.013184$ & $6.6909 \cdot 10^{-9}$ \\ \hline \hline
            \multirow{4}{*}{$x_4$} & $\theta_1$ & $4.1305 \cdot 10^{-10}$ & $4.1305 \cdot 10^{-10}$ & $\theta_5$ & $2.2727$ & $4.8438 \cdot 10^{-7}$ \\
            & $\theta_2$ & $-2.3228 \cdot 10^{-9}$ & $2.3228 \cdot 10^{-9}$ & $\theta_6$ & $-0.12397$ & $4.8415 \cdot 10^{-7}$ \\
            & $\theta_3$ & $0.54545$ & $9.6426 \cdot 10^{-7}$ & $\theta_7$ & $-5.3900 \cdot 10^{-10}$ & $5.3900 \cdot 10^{-10}$ \\
            & $\theta_4$ & $-2.7273$ & $9.6111 \cdot 10^{-7}$ & $\theta_8$ & $2.0204 \cdot 10^{-11}$ & $2.0204 \cdot 10^{-11}$ \\ \hline \hline
        \end{tabular} }
    \end{center}
    \caption{Sensitivity analysis for the deficiency zero network. This table lists the PE-estimated sensitivity values w.r.t. all eight parameters $\theta_1,\dots,\theta_8$ for four objective functions $f(x) =x_1$, $f(x) = x_2$, $f(x) =x_3$ and $f(x)=x_4$ with expected steady-state values $\E_{\theta}(x_1) = 5.625, \E_{\theta}(x_2) =  5, \E_{\theta}(x_3) = 1.0547$ and $\E_{\theta}(x_4)  = 1.3636$ respectively. The corresponding percentage relation error RE$\%$ \eqref{defn_RE_formula} for the sensitivity estimates is also provided. To perform all these computations the total processing time was $4386$ seconds on a single processor and the total memory requirement was 26'458 MB.}
    \label{tbl:defZeroSens}
\end{table}

\begin{table}
    \begin{center}
        \begin{tabular}{|c|c|c|c|}
            \hline
            No. & Reaction & Propensity \\
            \hline
            1 & $\mathcal{S}_1 \rightarrow \mathcal{S}_2$ & $\lambda_1(x,\theta) = \theta_1 x_1$ \\
            2 & $\mathcal{S}_2 \rightarrow \mathcal{S}_1$ & $\lambda_2(x,\theta) = \theta_2 x_2$ \\
            3 & $\mathcal{S}_2 \rightarrow \mathcal{S}_3$ & $\lambda_3(x,\theta) = \theta_3 x_2$ \\
            4 & $\mathcal{S}_3 \rightarrow \mathcal{S}_2$ & $\lambda_4(x,\theta) = \theta_4 x_3$ \\
            \hline
        \end{tabular}
        \caption{Reactions in the simple linear network. The propensities are parametrized by the vector $\theta = (\theta_1,\theta_2,\theta_3,\theta_4)$. }
        \label{PaperSystem}
    \end{center}
\end{table}

\begin{table}[h!]
\begin{center} {\small
        \begin{tabular}{|c | c | c | c | c | c | c |}
            \hline
            Objective & Sensitive & Sensitivity & \multirow{2}{*}{RE\%} & Sensitive & Sensitivity & \multirow{2}{*}{RE\%} \\
            Function & Parameter & Values (PE) & & Parameter & Values (PE) & \\ \hline \hline
            \multirow{2}{*}{$x_1$} & $\theta_1$ & $-0.24691$ & $1.0137 \cdot 10^{-5}$ & $\theta_3$ & $-49.383$ & $1.8538 \cdot 10^{-5}$ \\
            & $\theta_2$ & $0.12346$ & $1.0242 \cdot 10^{-5}$ & $\theta_4$ & $74.074$ & $9.3173 \cdot 10^{-6}$ \\ \hline \hline
            \multirow{2}{*}{$x_2$} & $\theta_1$ & $0.098765$ & $8.0763 \cdot 10^{-6}$ & $\theta_3$ & $-24.691$ & $1.8642 \cdot 10^{-5}$ \\
            & $\theta_2$ & $-0.049383$ & $8.1807 \cdot 10^{-6}$ & $\theta_4$ & $37.037$ & $9.2132 \cdot 10^{-6}$ \\ \hline \hline
            \multirow{2}{*}{$x_3$} & $\theta_1$ & $0.14815$ & $1.1511 \cdot 10^{-5}$ & $\theta_3$ & $74.074$ & $1.8572 \cdot 10^{-5}$ \\
            & $\theta_2$ & $-0.074074$ & $1.1616 \cdot 10^{-5}$ & $\theta_4$ & $-111.11$ & $9.2826 \cdot 10^{-6}$ \\ \hline \hline
        \end{tabular} }
    \end{center}
    \caption{Sensitivity analysis for the simple linear network. This table lists the PE-estimated sensitivity values w.r.t. all four parameters $\theta_1,\dots,\theta_4$ for three objective functions $f(x) =x_1$, $f(x) = x_2$ and $f(x)=x_3$ with expected steady-state values $\E_{\theta}(x_1) = 4.4444, \E_{\theta}(x_2) =  2.2222$ and $\E_{\theta}(x_3)  = 3.3333$ respectively. The corresponding percentage relation error RE$\%$ \eqref{defn_RE_formula} for the sensitivity estimates is also provided. To perform all these computations the total processing time was only $0.2$ seconds on a single processor and the total memory requirement was 105 MB.}
    \label{tbl:PaperSystemSens}
\end{table}


\bibliographystyle{unsrt}

\begin{thebibliography}{10}

\bibitem{Elowitz}
Michael~B. Elowitz, Arnold~J. Levine, Eric~D. Siggia, and Peter~S. Swain.
\newblock Stochastic gene expression in a single cell.
\newblock {\em Science}, 297(5584):1183--1186, 2002.

\bibitem{Goutsias}
J.~Goutsias.
\newblock Classical versus stochastic kinetics modeling of biochemical reaction
  systems.
\newblock {\em Biophysical Journal}, 92(7):2350--2365, 2007.

\bibitem{DASurvey}
D.A. Anderson and T.G. Kurtz.
\newblock Continuous time {M}arkov chain models for chemical reaction networks.
\newblock In H.~Koeppl, G.~Setti, M.~di~Bernardo, and D.~Densmore, editors,
  {\em Design and Analysis of Biomolecular Circuits}. Springer-Verlag, 2011.

\bibitem{FSP}
B.~Munsky and M.~Khammash.
\newblock The finite state projection algorithm for the solution of the
  chemical master equation.
\newblock {\em Journal of Chemical Physics}, 124(4), 2006.

\bibitem{GuptaPLOS}
Ankit Gupta, Corentin Briat, and Mustafa Khammash.
\newblock A scalable computational framework for establishing long-term
  behavior of stochastic reaction networks.
\newblock {\em PLoS Comput Biol}, 10(6):e1003669, 06 2014.

\bibitem{sFSP}
Ankit Gupta, Jan Mikelson, and Mustafa Khammash.
\newblock A finite state projection algorithm for the stationary solution of
  the chemical master equation.
\newblock {\em The Journal of Chemical Physics}, 147(15):154101, 2017.

\bibitem{Stelling}
Jörg Stelling, Ernst~Dieter Gilles, and Francis~J. Doyle.
\newblock Robustness properties of circadian clock architectures.
\newblock {\em Proceedings of the National Academy of Sciences of the United
  States of America}, 101(36):13210--13215, 2004.

\bibitem{Fink2009}
M.~Fink and D.~Noble.
\newblock Markov models for ion channels: Versatility versus identifiability
  and speed.
\newblock {\em Philosophical Transactions of the Royal Society A: Mathematical,
  Physical and Engineering Sciences}, 367(1896):2161--2179, 2009.

\bibitem{Feng}
Xiao-jiang Feng, Sara Hooshangi, David Chen, Ron Li, Genyuan;~Weiss, and
  Herschel Rabitz.
\newblock Optimizing genetic circuits by global sensitivity analysis.
\newblock {\em Biophysical journal}, 87(4):2195 -- 2202, 2004.

\bibitem{Gupta3}
Ankit Gupta and Mustafa Khammash.
\newblock Sensitivity analysis for stochastic chemical reaction networks with
  multiple time-scales.
\newblock {\em Electron. J. Probab}, 19(59):1--53, 2014.

\bibitem{IRN}
R.~Gunawan, Y.~Cao, and F.J. Doyle.
\newblock Sensitivity analysis of discrete stochastic systems.
\newblock {\em Biophysical Journal}, 88(4):2530--2540, 2005.

\bibitem{Gir}
S.~Plyasunov and A.P. Arkin.
\newblock Efficient stochastic sensitivity analysis of discrete event systems.
\newblock {\em Journal of Computational Physics}, 221:724--738, 2007.

\bibitem{KSR1}
M.~Rathinam, P.~W. Sheppard, and M.~Khammash.
\newblock Efficient computation of parameter sensitivities of discrete
  stochastic chemical reaction networks.
\newblock {\em Journal of Chemical Physics}, 132, 2010.

\bibitem{KSR2}
P.~W. Sheppard, M.~Rathinam, and M.~Khammash.
\newblock A pathwise derivative approach to the computation of parameter
  sensitivities in discrete stochastic chemical systems.
\newblock {\em Journal of Chemical Physics}, 136, 2012.

\bibitem{DA}
D.~Anderson.
\newblock An efficient finite difference method for parameter sensitivities of
  continuous time markov chains.
\newblock {\em SIAM: Journal on Numerical Analysis}, 50, 2012.

\bibitem{Our}
A.~Gupta and M.~Khammash.
\newblock Unbiased estimation of parameter sensitivities for stochastic
  chemical reaction networks.
\newblock {\em SIAM Journal on Scientific Computing}, 35(6):A2598--A2620, 2013.

\bibitem{Gupta2}
Ankit Gupta and Mustafa Khammash.
\newblock An efficient and unbiased method for sensitivity analysis of
  stochastic reaction networks.
\newblock {\em Journal of The Royal Society Interface}, 11(101):20140979, 2014.

\bibitem{GP}
Daniel~T. Gillespie.
\newblock Exact stochastic simulation of coupled chemical reactions.
\newblock {\em The Journal of Physical Chemistry}, 81(25):2340--2361, 1977.

\bibitem{NR}
Michael~A. Gibson and Jehoshua Bruck.
\newblock Efficient exact stochastic simulation of chemical systems with many
  species and many channels.
\newblock {\em The Journal of Physical Chemistry A}, 104(9):1876--1889, 2000.

\bibitem{AndMod}
David~F. Anderson.
\newblock A modified next reaction method for simulating chemical systems with
  time dependent propensities and delays.
\newblock {\em The Journal of Chemical Physics}, 127(21), 2007.

\bibitem{wang2018steady}
Ting Wang and Petr Plechac.
\newblock Steady state sensitivity analysis of continuous time markov chains.
\newblock {\em arXiv preprint arXiv:1804.00585}, 2018.

\bibitem{meyn2008control}
Sean Meyn.
\newblock {\em Control techniques for complex networks}.
\newblock Cambridge University Press, 2008.

\bibitem{feinberg2012handbook}
Eugene~A Feinberg and Adam Shwartz.
\newblock {\em Handbook of Markov decision processes: methods and
  applications}, volume~40.
\newblock Springer Science \& Business Media, 2012.

\bibitem{Meyn}
Sean~P. Meyn and R.~L. Tweedie.
\newblock Stability of {M}arkovian processes. {III}. {F}oster-{L}yapunov
  criteria for continuous-time processes.
\newblock {\em Adv. in Appl. Probab.}, 25(3):518--548, 1993.

\bibitem{gupta2018computational}
Ankit Gupta and Mustafa Khammash.
\newblock Computational identification of irreducible state-spaces for
  stochastic reaction networks.
\newblock {\em SIAM Journal on Applied Dynamical Systems}, 17(2):1213--1266,
  2018.

\bibitem{hucka2003systems}
Michael Hucka, Andrew Finney, Herbert~M Sauro, Hamid Bolouri, John~C Doyle,
  Hiroaki Kitano, Adam~P Arkin, Benjamin~J Bornstein, Dennis Bray, Athel
  Cornish-Bowden, et~al.
\newblock The systems biology markup language (sbml): a medium for
  representation and exchange of biochemical network models.
\newblock {\em Bioinformatics}, 19(4):524--531, 2003.

\bibitem{cossmosLib}
{cossmosLib Repository}.
\newblock \url{https://git.bsse.ethz.ch/patrikd/cossmosLib}.
\newblock Accessed: Nov. 5, 2018.

\bibitem{Karypis98}
George Karypis and Vipin Kumar.
\newblock A parallel algorithm for multilevel graph partitioning and sparse
  matrix ordering.
\newblock {\em Journal of Parallel and Distributed Computing}, 48(1):71--95,
  1998.

\bibitem{Li03}
Xiaoye~S Li and James~W Demmel.
\newblock Superlu\_dist: A scalable distributed-memory sparse direct solver for
  unsymmetric linear systems.
\newblock {\em ACM Transactions on Mathematical Software (TOMS)},
  29(2):110--140, 2003.

\bibitem{Bornstein08}
Benjamin~J Bornstein, Sarah~M Keating, Akiya Jouraku, and Michael Hucka.
\newblock Libsbml: an api library for sbml.
\newblock {\em Bioinformatics}, 24(6):880--881, 2008.

\bibitem{Baker2009}
Chris~G Baker, Ulrich~L Hetmaniuk, Richard~B Lehoucq, and Heidi~K Thornquist.
\newblock Anasazi software for the numerical solution of large-scale eigenvalue
  problems.
\newblock {\em ACM Transactions on Mathematical Software (TOMS)}, 36(3):13,
  2009.

\bibitem{Heroux2003}
M~Heroux, R~Bartlett, V~Howle, et~al.
\newblock An overview of trilinos, sandia national laboratories.
\newblock Technical report, Technical Report SAND2003-2927, 2003.

\bibitem{Sala07}
Marzio Sala, Ken Stanley, and Michael~A Heroux.
\newblock Amesos: A set of general interfaces to sparse direct solver
  libraries.
\newblock In {\em International Workshop on Applied Parallel Computing}, pages
  976--985. Springer, 2006.

\bibitem{Euler}
{Euler Cluster Specifications}.
\newblock \url{https://scicomp.ethz.ch/wiki/Euler}.
\newblock Accessed: Nov. 5, 2018.

\bibitem{MO}
Mukund Thattai and Alexander van Oudenaarden.
\newblock Intrinsic noise in gene regulatory networks.
\newblock {\em Proceedings of the National Academy of Sciences},
  98(15):8614--8619, 2001.

\bibitem{Gardner}
Timothy~S. Gardner, Charles~R. Cantor, and James~J. Collins.
\newblock Construction of a genetic toggle switch in escherichia coli.
\newblock {\em Nature}, 403(6767):339--342, 2000.

\bibitem{anderson2010product}
David~F Anderson, Gheorghe Craciun, and Thomas~G Kurtz.
\newblock Product-form stationary distributions for deficiency zero chemical
  reaction networks.
\newblock {\em Bulletin of mathematical biology}, 72(8):1947--1970, 2010.

\bibitem{glynn1996liapounov}
Peter~W Glynn and Sean~P Meyn.
\newblock A liapounov bound for solutions of the poisson equation.
\newblock {\em The Annals of Probability}, pages 916--931, 1996.

\bibitem{gupta2018estimation}
Ankit Gupta, Muruhan Rathinam, and Mustafa Khammash.
\newblock Estimation of parameter sensitivities for stochastic reaction
  networks using tau-leap simulations.
\newblock {\em SIAM Journal on Numerical Analysis}, 56(2):1134--1167, 2018.

\bibitem{EK}
S.~N. Ethier and T.~G. Kurtz.
\newblock {\em Markov processes}.
\newblock Wiley Series in Probability and Mathematical Statistics: Probability
  and Mathematical Statistics. John Wiley \& Sons Inc., New York, 1986.
\newblock Characterization and convergence.

\bibitem{Hart}
Andrew~G Hart and Richard~L Tweedie.
\newblock Convergence of invariant measures of truncation approximations to
  markov processes.
\newblock {\em Applied Mathematics}, 3(12):2205, 2012.

\end{thebibliography}

\end{document}